\newtheorem{dfn}{Definition}
\newtheorem{thm}{Theorem}
\newtheorem{lem}{Lemma}
\newcommand{\todo}[1]{{\scriptsize{{\textcolor{red}{[#1]}}}}}
\newcommand{\rem}[1]{{\scriptsize{{\textcolor{blue}{[#1]}}}}}
\renewcommand{\todo}[1]{}
\renewcommand{\rem}[1]{}
\newcommand{\false}{\ensuremath{\mathit{false}}}
\newcommand{\true}{\ensuremath{\mathit{true}}}
\newcommand{\var}[1]{{\tt #1}}
\newcommand{\prm}[1]{\ensuremath{\llbracket #1 \rrbracket}}
\newcommand{\ese}[1]{\ensuremath{\langle #1 \rangle}}
\newcommand{\bld}[1]{\ensuremath{\boldsymbol{#1}}}
\newcommand{\asgn}{\texttt{ := }}
\newcommand{\KwInSep}{ \\ \hspace{1cm} }
\newcommand{\KwOutSep}{ \\ \hspace{1.3cm} }
\newcommand{\lmark}[1]{\nlset{#1~~~~}}
\begin{document}

\title{Compact Symbolic Execution}
\author{Marek Trt\'{\i}k}
\institute{
    Faculty of Informatics, Masaryk University, Brno, Czech Republic\\
    \email{trtik@fi.muni.cz}
}
\maketitle

\begin{abstract}
    We present a generalisation of King's symbolic execution technique called compact symbolic execution. It is based on a concept of templates: a template is a declarative parametric description of such a program part, generating paths in symbolic execution tree with regularities in program states along them. Typical sources of these paths are program loops and recursive calls. Using the templates we fold the corresponding paths into single vertices and therefore considerably reduce size of the tree without loss of any information. There are even programs for which compact symbolic execution trees are finite even though the classic symbolic execution trees are infinite.
\end{abstract}

\section{Introduction} \label{sec:Introduction}

Classic symbolic execution as proposed by King in 1976~\cite{Kin76} systematically explores all real paths in an analysed program. There is typically huge (or even infinite) number of real paths even for very small and simple programs. Therefore, exploration of the real paths becomes a serious problem. We speak about the \emph{path explosion problem}.

Compact symbolic execution also explores all real program paths, but in a very compact manner. We analyse a given program \emph{before} we start its symbolic execution. We look for those parts of the program, which might produce real paths with some regularities in program states along them. Typically, program loops and recursion produces these regularities. We analyse the program parts independently from the remainder of the program. If the analysis of a part succeeds, then a result is a \emph{template}, i.e.~a declarative parametric description of the complete behaviour of the analysed part. Therefore, an output from the program analysis is a set of templates. Now we can execute the program symbolically with the templates. Until we reach some of the successfully analysed program parts, we proceed just like in classic symbolic execution. Let us now suppose we have just reached such a part. Having a template for the part, we do not need to symbolically execute interior of the part. We just instantiate the template into the end of the current path and then we \emph{jump} behind the part, where we continue with classic symbolic execution again.

Let us consider a symbolic execution reaching a loop. The execution may fork into a huge number of other symbolic executions during the execution of the loop. Each such execution has its own path in symbolic execution tree of classic symbolic execution. But having a template for the loop, we represent all these paths by a single one with the instantiated template. In other words, a single path explored by compact symbolic execution may represent a huge number of paths explored by classic symbolic execution. And that is the cause of the considerable space savings of compact symbolic execution. On the other hand, we will see that compact symbolic execution has higher requirements to performance of SMT solvers then classic one.

The worst case for compact symbolic execution is, when we fail to compute any template for a given program. Compact symbolic execution then reduces to classic one, and we gain no space savings.

\section{Overview} \label{sec:Overview}

In this section we give an intuition of compact symbolic execution. For simplicity of presentation we use the following definition of a program. Although our programs are simple they support typical imperative constructs and recursion.

\begin{dfn}[Program] \label{def:Program}
    A \emph{program} is a collection of functions and global variables. Each function has its own local variables. All program variables and functions have different names. Exactly one function is marked as starting one. Each function is represented as an oriented graph. Vertices in the graph identify program locations, while edges define transitions between them. We distinguish a single entry and exit location in each graph. There is no in-edge to entry location and there is no out-edge from the exit one. We label edges by actions to be taken when moving between connected locations. An action can be
    \begin{itemize}
        \item[(1)] An assignment of the form $<$variable$>${\tt :=}$<$expression$>$,

        \item[(2)] Call by value statements\\(a) $<$variable$>${\tt :=}$<$function-name$>${\tt (}$<$arg-list$>${\tt )}, or\\(b) $<$function-name$>${\tt (}$<$arg-list$>${\tt )}

        \item[(3)] A return value statement {\tt ret }$<$expression$>$,

        \item[(4)] {\tt skip} statement, which does nothing, or

        \item[(5)] A boolean expression over program variables.
    \end{itemize}
    If an edge $e = (u,v)$ is labelled by one of the actions (1)-(4), then out-degree of $u$ is $1$. Otherwise, label of $e$ is an action (5), out-degree of $u$ is $2$ and its out-edges are labelled by boolean expressions $\gamma$ and $\neg \gamma$. No action (2) can reference the starting function and no entry nor exit location is incident with an edge having an action (2). Each function $f$ is assigned a unique global variable $\textbf{ret}_f$ used for actions (2a) to save a return value being later assigned to the destination variable. And for simplicity we do not consider pointer arithmetic nor heap allocations. We prevent invalid operations in actions (like division by zero, etc.) by branchings into error locations. An error location is any location with a single out-edge heading back to that location and it is labelled with {\tt skip} action.
\end{dfn}

We can see an example of a program at Figure~\ref{fig:linSrch}~(a). The depicted function \texttt{linSrch} returns the least index \texttt{i} into the array \texttt{A} such that \texttt{A[i]==x}. If \texttt{x} is not in \texttt{A} at all, then it returns \texttt{-1}.

We first briefly describe classic symbolic execution as proposed by King~\cite{Kin76}. Instead of passing concrete data into parameters of the starting function, we pass symbols from a set $\{ \alpha_0, \alpha_1, \ldots \}$. Let us suppose we pass symbols $\alpha_0$ and $\alpha_1$ to variables \texttt{a} and \texttt{b} respectively. After executing an action \texttt{c:=2*a+b} the variable \texttt{c} will contain a \emph{symbolic expression} $2 \alpha_0 + \alpha_1$ as its value. \emph{Symbolic memory} is a function $\theta$ from program variables to a set of symbolic expressions. We further maintain a boolean symbolic expression $\varphi$ called \emph{path condition}. It represent a complete identifier of a particular program path taken during an execution. $\varphi$ is initially $\true$ and it can be updated at program branchings. Let $\theta$ be a symbolic memory having $\theta(\texttt{a}) = \alpha_0$, $\theta(\texttt{b}) = \alpha_1$ and $\theta(\texttt{c}) = 2 \alpha_0 + \alpha_1$ and let \texttt{c-a>2*b} and \texttt{c-a<=2*b} be actions of out-edges of an branching location. For the first action we proceed as follows. We evaluate the action in $\theta$. The result is a boolean symbolic expression $\alpha_0 + \alpha_1 > 2\alpha_1$. If $\varphi \rightarrow (\alpha_0 + \alpha_1 > 2\alpha_1)$ is satisfiable, we update $\varphi$ to $\varphi \wedge (\alpha_0 + \alpha_1 > 2\alpha_1)$ and we continue the execution by crossing the edge having the action. Then we proceed similarly for the second action. Note that if both implications are satisfiable, we fork the execution into two parallel and independent executions. Besides a symbolic memory and a path condition we commonly have a call stack $\Xi$ and we also need to identify a current program location $l$. Putting all the things together we get a \emph{program state} represented by a tuple $s = (\theta, \varphi, \Xi, l)$. Note that we understand a call stack record as pairs $(\sigma, l)$, where $l$ is a return location and $\sigma$ is a restriction of a symbolic memory to local variables. Further, we commonly describe the symbolic execution of a program by a tree structure called \emph{symbolic execution tree}. Vertices of the tree are related to program locations visited during the execution and edges reflect transitions between the locations. Each vertex of the tree is labelled by a related program state. But instead of labels $T$ and $F$ for branching edges (as proposed by King), we label them by evaluated actions of the branching edges. Figure~\ref{fig:linSrch}~(b) depicts a part of symbolic execution tree of the example program from Figure~\ref{fig:linSrch}~(a) (with omitted program states labelling the vertices). Please ignore grey regions in the tree for now. We assume that classic symbolic execution of the program started with an initial symbolic memory $\theta = \{ (\texttt{i},\alpha_0), (\texttt{n},\alpha_1), (\texttt{x},\alpha_2), (\texttt{A},\alpha_3) \}$.

We often use the following dot-notation to access elements of tuples. If $s = (\theta, \varphi, \Xi, l)$ is a program state, then $s.\theta$ denotes its symbolic memory, $s.\varphi$ denotes its path condition, $s.\Xi$ is its call stack and $s.l$ is a current program location. Further, if $u$ is a vertex of symbolic execution tree, then $u.s$ denotes program state labelling the vertex. And instead of $u.s.\theta$, $u.s.\varphi$, $u.s.\Xi$ and $u.s.l$ we simply write $u.\theta$, $u.\varphi$, $u.\Xi$ and $u.l$. Finally, if $\Xi$ is a call stack then we use dot-notation to access record at the top of the call stack. So, for example $\Xi.l$ denotes return location of record at the top of $\Xi$.

Symbols $\{ \alpha_0, \alpha_1, \ldots \}$ in classic symbolic execution represent input values to whole program. We generalise this concept to allow independent symbolic execution of \emph{parts} of an analysed program independently to the remainder. Each such a part uses the symbols $\{ \alpha_0, \alpha_1, \ldots \}$ relative to a chosen entry location to the part. Then using a \emph{composition} of program states (defined later) we can express any run of classic symbolic execution as a composition of program states resulting from analyses of the parts. Let $s = (\theta, \varphi, \Xi, l)$ be a program state resulting from a symbolic execution from a program location $l_0$ (e.g.~the entry location of the starting function), up to an entry location $l$ of an independently analysed program part. Let $s' = (\theta', \varphi', \Xi', l')$ be a program state resulting from the analysis of the part, i.e.~$s'$ represents a symbolic execution from the entry location $l$ to some exit location $l'$ from the part. Then $s \circ s' = (\theta \circ \theta', \varphi \wedge \theta\ese{\varphi'}, \Xi \circ (\theta \circ \Xi'), l')$ is composed program state representing symbolic execution from $l_0$ to $l'$ through the analysed part (entered in location $l$). We can see that composition of program states is implemented as composition of their individual components. We discuss very details of these operations in Section~\ref{sec:Definition}. Only note that composed path condition is $\varphi \wedge \theta\ese{\varphi'}$ rather then $\varphi \wedge \varphi'$. This is because $\varphi'$ may contain some symbols. But they are related to the entry location $l$ of the analysed part and not to the location $l_0$. Therefore, we have to compose $\varphi'$ with $\theta$ first to express $\varphi'$ in terms of symbols relative to location $l_0$. We do the similar effect of shifting symbols from location $l$ to $l_0$ in the compositions $\theta \circ \theta'$ and $\theta \circ \Xi'$.

\begin{figure}[!htb]
    \begin{tabular}{ccc}
        {\centering
\tikzstyle{start} = [regular polygon,regular polygon sides=3,
    regular polygon rotate=180,thick,draw,inner sep=1.6pt]
\tikzstyle{target} = [regular polygon,regular polygon sides=3,
    regular polygon rotate=0,thick,draw,inner sep=1pt]
\tikzstyle{loc} = [circle,thick,draw]
\tikzstyle{pre} = [<-,shorten <=1pt,>=stealth',semithick]
\tikzstyle{post} = [->,shorten <=1pt,>=stealth',semithick]
\footnotesize
\begin{tikzpicture}[node distance=1.0cm]
    \node [start] (a) {$a$};
    \node [loc] (b) [below of=a] {$b$}
        edge [pre] node [label=right:\texttt{i:=0}] {} (a);
    \node [loc] (c) [left of=b,yshift=-10mm] {$c$}
        edge [pre, bend left]
                node [label=above:\texttt{i<n~~~~}] {} (b);
    \node [loc] (f) [right of=b,yshift=-10mm] {$f$}
        edge [pre, bend right]
                node [label=above:\texttt{~~~~i>=n}] {} (b);
    \node [loc] (d) [right of=c,yshift=-10mm] {$d$}
        edge [pre, bend left]
            node [label=below:\texttt{A[i]!=x~~~~~~}] {} (c)
        edge [post, bend right=35]
                node [label=left:\texttt{++i}] {} (b);
    \node [loc] (e) [left of=c,yshift=-10mm] {$e$}
        edge [pre, bend left] node
            [label=above:\texttt{A[i]=x~~~~~}] {} (c);
    \node [target] (g) [below of=d,yshift=-5mm] {$g$}
        edge [pre, bend left=25]
                node [label=left:\texttt{ret i~~}] {} (e)
        edge [pre, bend right]
                node [label=left:\texttt{ret -1}] {} (f);
\end{tikzpicture} } & {\centering
\tikzstyle{start} = [regular polygon,regular polygon sides=3,
    regular polygon rotate=180,thick,draw,inner sep=0.5pt]
\tikzstyle{target} = [regular polygon,regular polygon sides=3,
    regular polygon rotate=0,thick,draw,inner sep=0.2pt]
\tikzstyle{loc} = [circle,thick,draw,inner sep=1.0pt]
\tikzstyle{pre} = [<-,shorten <=1pt,>=stealth',semithick]
\tikzstyle{post} = [->,shorten <=1pt,>=stealth',semithick]
\footnotesize
\begin{tikzpicture}[node distance=0.8cm]

    \pgfsetmovetofirstplotpoint

    \pgfplothandlerclosedcurve
    \pgfplotstreamstart
        \pgfplotstreampoint{\pgfpoint{0.2cm}{-0.5cm}}
        \pgfplotstreampoint{\pgfpoint{-1.1cm}{-1.3cm}}
        \pgfplotstreampoint{\pgfpoint{0.15cm}{-3cm}}
    \pgfplotstreamend
    \pgfsetfillcolor{lightgray!100}
    \pgfusepath{fill}
    \pgfplotstreamstart
        \pgfplotstreampoint{\pgfpoint{0.2cm}{-2.3cm}}
        \pgfplotstreampoint{\pgfpoint{-1.1cm}{-3.1cm}}
        \pgfplotstreampoint{\pgfpoint{0.15cm}{-4.8cm}}
    \pgfplotstreamend
    \pgfsetfillcolor{lightgray!100}
    \pgfusepath{fill}
    \pgfplothandlercurveto
    \pgfplotstreamstart
        \pgfplotstreampoint{\pgfpoint{-1.1cm}{-4.9cm}}
        \pgfplotstreampoint{\pgfpoint{0.15cm}{-4.0cm}}
        \pgfplotstreampoint{\pgfpoint{0.25cm}{-4.9cm}}
    \pgfplotstreamend
    \pgfsetfillcolor{lightgray!100}
    \pgfusepath{fill}

    \pgfplothandlerclosedcurve
    \pgfplotstreamstart
        \pgfplotstreampoint{\pgfpoint{0.2cm}{-0.5cm}}
        \pgfplotstreampoint{\pgfpoint{-1.1cm}{-1.3cm}}
        \pgfplotstreampoint{\pgfpoint{0.15cm}{-3cm}}
    \pgfplotstreamend
    \pgfusepath{stroke}
    \pgfplotstreamstart
        \pgfplotstreampoint{\pgfpoint{0.2cm}{-2.3cm}}
        \pgfplotstreampoint{\pgfpoint{-1.1cm}{-3.1cm}}
        \pgfplotstreampoint{\pgfpoint{0.15cm}{-4.8cm}}
    \pgfplotstreamend
    \pgfusepath{stroke}
    \pgfplothandlercurveto
    \pgfplotstreamstart
        \pgfplotstreampoint{\pgfpoint{-1.1cm}{-4.9cm}}
        \pgfplotstreampoint{\pgfpoint{0.15cm}{-4.0cm}}
        \pgfplotstreampoint{\pgfpoint{0.25cm}{-4.9cm}}
    \pgfplotstreamend
    \pgfusepath{stroke}

%

    \node [loc,inner sep=1.5pt] (a) {$a$};

    \node [loc] (b) [below of=a] {$b$}
        edge [pre] node {} (a);
    \node [loc] (c) [left of=b,yshift=-5mm] {$c$}
        edge [pre] node [label=above:{\scriptsize $0<\alpha_1~~~~~~$}] {} (b);
    \node [loc,inner sep=0.5pt] (f) [right of=b,xshift=2mm,yshift=-5mm] {$f$}
        edge [pre] node [label=above:{\scriptsize $~~~~0\geq\alpha_1$}] {} (b);
    \node [loc] (g) [below of=f] {$g$}
        edge [pre] node {} (f);
    \node [loc] (d) [right of=c,yshift=-5mm] {$d$}
        edge [pre] node [label=above:{\scriptsize
            $~~~~~~~~~~\alpha_3(0)\neq\alpha_2$}] {} (c);
    \node [loc] (e) [left of=c,yshift=-5mm] {$e$}
        edge [pre] node [label=above:{\scriptsize
            $\alpha_3(0)=\alpha_2~~~~~~~~~~$}] {} (c);
    \node [loc] (h) [below of=e] {$g$}
        edge [pre] node {} (e);

    \node [loc] (b1) [below of=d] {$b$}
        edge [pre] node {} (d);
    \node [loc] (c1) [left of=b1,yshift=-5mm] {$c$}
        edge [pre] node [label=above:{\scriptsize $1<\alpha_1~~~~~~$}] {} (b1);
    \node [loc,inner sep=0.5pt] (f1)[right of=b1,xshift=2mm,yshift=-5mm]{$f$}
        edge [pre] node [label=above:{\scriptsize $~~~~1\geq\alpha_1$}] {} (b1);
    \node [loc] (g1) [below of=f1] {$g$}
        edge [pre] node {} (f1);
    \node [loc] (d1) [right of=c1,yshift=-5mm] {$d$}
        edge [pre] node [label=above:{\scriptsize
            $~~~~~~~~~~\alpha_3(1)\neq\alpha_2$}] {} (c1);
    \node [loc] (e1) [left of=c1,yshift=-5mm] {$e$}
        edge [pre] node [label=above:{\scriptsize
            $\alpha_3(1)=\alpha_2~~~~~~~~~~$}] {} (c1);
    \node [loc] (h1) [below of=e1] {$g$}
        edge [pre] node {} (e1);

    \node [loc] (b2) [below of=d1] {$b$}
        edge [pre] node {} (d1);

    \node [] (b3) [below of=b2] {}
        edge [dotted,thick] node {} (b2);
\end{tikzpicture} } & {\centering
\tikzstyle{start} = [regular polygon,regular polygon sides=3,
    regular polygon rotate=180,thick,draw,inner sep=0.5pt]
\tikzstyle{target} = [regular polygon,regular polygon sides=3,
    regular polygon rotate=0,thick,draw,inner sep=0.2pt]
\tikzstyle{loc} = [circle,thick,draw,inner sep=1.0pt]
\tikzstyle{cloc} = [rectangle,thick,draw,inner sep=3pt]
\tikzstyle{pre} = [<-,shorten <=1pt,>=stealth',semithick]
\tikzstyle{post} = [->,shorten <=1pt,>=stealth',semithick]
\footnotesize
\begin{tikzpicture}[node distance=0.8cm]
%

     \node [loc,inner sep=1.5pt] (7)
                {$a$};
     \node [cloc] (8) [below of=7] {$b$}
         edge [pre] node {} (7);
     \node [loc,inner sep=1.5pt] (9) [below of=8, left of=8,yshift=-0.5cm] {$e$}
         edge [pre] node [label=left:{
                $\gamma_e$}] {} (8);
     \node [loc,inner sep=0.3pt] (10) [below of=8, right of=8,yshift=-0.5cm] {$f$}
         edge [pre] node [label=right:{
                $\gamma_f$}] {} (8);
     \node [loc,inner sep=1.0pt] (11) [below of=9] {$g$}
         edge [pre] node {} (9);
     \node [loc,inner sep=1.0pt] (12) [below of=10] {$g$}
         edge [pre] node {} (10);

     \node [rectangle] (b) [below of=7,xshift=3.5mm] {$s$};
     \node [rectangle] (e) [above of=11,xshift=4mm] {$s_e'$};
     \node [rectangle] (f) [above of=12,xshift=-3.8mm,yshift=-0.4mm] {$s_f'$};
\end{tikzpicture} }
        \\
        (a) & (b) & (c)
    \end{tabular}
    \caption{(a) A program with a function \texttt{linSrch(A,n,x)}. (b) Symbolic execution tree of function \texttt{linSrch}. (c) Compact symbolic execution tree of function \texttt{linSrch}.}
    \label{fig:linSrch}
\end{figure}

In symbolic execution tree at Figure~\ref{fig:linSrch}~(b) there is a single path highlighted by a sequence of grey regions. Vertices in each region are related to the same sequence of program locations: $b,c,d,b$. Moreover, we enter the path at vertex referencing location $b$ and we can leave the path either by stepping into a vertex referencing location $e$ or into a vertex referencing location $f$. Let us denote the entry vertex into the path as $b_0$ and the exit vertices from the path referencing locations $e$ and $f$ as $e_0, e_1, \ldots$ and $f_0, f_1, \ldots$ respectively being indexed from the top down. Our goal is to completely eliminate the path in grey from the tree, while still representing all real program paths. One way to do so is to represent whole the path by a single vertex, $b$ say, with two direct successors. The first successor, $e$ say, represents all the exit vertices $e_i$ from the path and the second, $f$ say, representing all the exits $f_i$. Note that names of the vertices $b$, $e$ and $f$ also represent program locations they reference. We label the vertex $b$ by the program state labelling $b_0$. But the question is what program states we should assign to the vertices $e$ and $f$. Note that two different vertices $e_i$ and $e_j$ may be labelled by different program states. So, for the vertex $e$ we need to introduce a program state $e.s\prm{\kappa}$, parametrised by a \emph{parameter} $\kappa$, such that each program state $e_i.s$ can be equivalently expressed by $e.s\prm{\kappa}$, when $\kappa$ is substituted by some number $\nu$. Of course, for different states $e_i.s$ and $e_j.s$ there are different numbers, say $\nu_i$ and $\nu_j$, for parameter substitution. We similarly need a parametrised program state $f.s\prm{\kappa}$ for the vertex $f$. We compute the states $e.s\prm{\kappa}$ and $f.s\prm{\kappa}$ before we start symbolic execution of the program from the Figure~\ref{fig:linSrch}~(a) by analysing the following its part. The part consists of all the locations $b,c,d,e,f$ discussed above and of all the edges between them. Note that the sequence $b,c,d,b$ of locations forms a cyclic path inside the analysed part. This cycle is actually the source of the path in grey regions. Nevertheless, we want to describe program states at \emph{exits} form the part. The exits from the part are target vertices of those edges of the part, which do not belong to the cycle. Therefore, locations $e$ and $f$ are the exits from the part. We also identify the location $b$ as entry location into the part, since we can enter the part by stepping into location $b$. The part is completely defined now. We analyse it independently from the remainder of the program. It mainly means that if we use some symbols $\alpha_i$ in the analysis, then they are related to the entry location $b$ of the part and not to the entry location of the whole program. At this point we are more concerned about formulation of a result from the analysis and its usage then the analysis itself. Therefore, we postpone its description to Section~\ref{sec:Detection}. We assume here that key properties $e.s\prm{\kappa}$ and $f.s\prm{\kappa}$ from the analysis are already computed, so we may formulate an output from the analysis of the part as the following \emph{template}
\[t = (b, 2, \{ (\theta_e, \varphi_e, [], e)\prm{\kappa}, (\theta_f, \varphi_f, [], f)\prm{\kappa} \}),\]
where $b$ is the entry location to the analysed part, the number $2$ identifies number of following parametrised program states and the remaining two tuples are the parametrised program states $e.s\prm{\kappa}$ and $f.s\prm{\kappa}$ respectively. Note that $[]$ identifies empty call stack. The template contains all the information we need to build compact symbolic execution tree, where the path in grey is folded as described above.

Let us symbolically execute the program at Figure~\ref{fig:linSrch}~(a) with the template $t$. We construct a compact symbolic execution tree during the execution. The tree is depicted at Figure~\ref{fig:linSrch}~(c). We apply classic symbolic execution, until we reach the entry location $t.b$. Let $b$ be the vertex in the tree, when we reach the location $t.b$ and let $s$ be the program state $b.s$. We now instantiate the template. Since we have exactly two program states in $t$, we create exactly two successor vertices $e$ and $f$ of the vertex $b$ in the tree. The vertices $e$ and $f$ references locations $t.e$ and $t.f$ respectively and they are further labelled by program states $s~\circ~(t.\theta_e,~t.\varphi_e,~[],~t.e)\prm{\kappa}$ and $s~\circ~(t.\theta_f,~t.\varphi_f,~[],~t.f)\prm{\kappa}$ respectively. We finish the instantiation of $t$ by creating edges $(b,e)$ and $(b,f)$ labelled by symbolic expressions $s.\theta\ese{t.\varphi_e\prm{\kappa}}$ and $s.\theta\ese{t.\varphi_f\prm{\kappa}}$ respectively. The situation is also depicted at Figure~\ref{fig:linSrch}~(c). Then we continue from both vertices $e$ and $f$ independently using classic symbolic execution again. These both executions reaches function exit location $g$ in one step and compact symbolic execution terminates.

\begin{figure}[!htb]
    \begin{tabular}{cc}
        {\centering
\tikzstyle{start} = [regular polygon,regular polygon sides=3,
    regular polygon rotate=180,thick,draw,inner sep=2.0pt]
\tikzstyle{target} = [regular polygon,regular polygon sides=3,
    regular polygon rotate=0,thick,draw,inner sep=1pt]
\tikzstyle{loc} = [circle,thick,draw]
\tikzstyle{pre} = [<-,shorten <=1pt,>=stealth',semithick]
\tikzstyle{post} = [->,shorten <=1pt,>=stealth',semithick]
\footnotesize
\begin{tikzpicture}[node distance=1.1cm]
    \node [start] (a) {$a$};
    \node [loc] (b) [below of=a] {$b$}
        edge [pre] node [label=right:\texttt{k:=0}] {} (a);
    \node [loc] (c) [below of=b] {$c$}
        edge [pre] node [label=right:\texttt{i:=0}] {} (b);
    \node [loc] (d) [left of=c,yshift=-10mm] {$d$}
        edge [pre, bend left] node [label=above:\texttt{i<n~~~~}] {} (c);
    \node [loc] (e) [left of=d,yshift=-10mm] {$e$}
        edge [pre, bend left] node [label=left:\texttt{A[i]=x}] {} (d);
    \node [loc] (f) [right of=d,yshift=-10mm] {$f$}
        edge [pre, bend right]
            node [label=below:\texttt{A[i]!=x~~~~~~~}] {} (d)
        edge [pre, bend left] node [label=above:\texttt{++k}] {} (e)
        edge [post, bend right=35] node [label=left:\texttt{++i}] {} (c);
    \node [loc] (g) [right of=c,yshift=-10mm] {$g$}
        edge [pre, bend right] node [label=above:\texttt{~~~~i>=n}] {} (c);
    \node [target] (h) [below of=g] {$h$}
        edge [pre] node [label=right:\texttt{ret k}] {} (g);
\end{tikzpicture}} & {\centering
\tikzstyle{start} = [regular polygon,regular polygon sides=3,
    regular polygon rotate=180,thick,draw,inner sep=0.5pt]
\tikzstyle{target} = [regular polygon,regular polygon sides=3,
    regular polygon rotate=0,thick,draw,inner sep=-0.4pt]
\tikzstyle{loc} = [circle,thick,draw,inner sep=1pt]
\tikzstyle{pre} = [<-,shorten <=1pt,>=stealth',semithick]
\tikzstyle{post} = [->,shorten <=1pt,>=stealth',semithick]
\footnotesize
\begin{tikzpicture}[node distance=0.7cm]
    \node [loc,inner sep=1.5pt] (1) {$a$};
    \node [loc] (2) [below of=1] {$b$}
        edge [pre] node {} (1);

    \node [rectangle,thick,draw] (3) [below of=2] {$c$}
        edge [pre] node {} (2);
    \node [loc] (4) [left of=3] {$g$}
        edge [pre] node [label=above:{$\gamma_{g}^{1}$}] {} (3);
    \node [loc] (5) [below of=4] {$h$}
        edge [pre] node {} (4);

    \node [loc,inner sep=0.4pt] (6) [below of=3,yshift=-1mm] {$f$}
        edge [pre] node [label=right:{$\gamma_{f}^{1}$}] {} (3);
    \node [rectangle,thick,draw] (7) [below of=6] {$c$}
        edge [pre] node {} (6);
    \node [loc] (8) [left of=7] {$g$}
        edge [pre] node [label=above:{$\gamma_{g}^{2}$}] {} (7);
    \node [loc] (9) [below of=8] {$h$}
        edge [pre] node {} (8);

    \node [loc,inner sep=0.4pt] (10) [below of=7,yshift=-1mm] {$f$}
        edge [pre] node [label=right:{$\gamma_{f}^{2}$}] {} (7);

    \node [] (X1) [below of=10] {} edge [dotted,thick] node {} (10);

    \node [] (X2) [above of=5] {$s_g^1~~~~~~~$};
    \node [] (X3) [above of=9] {$s_g^2~~~~~~~$};

    \node [] (X4) [above of=7] {$~~~~~~~s_f^1$};
    \node [] (X5) [above of=X1] {$~~~~~~~s_f^2$};

    \node [] (X6) [right of=4] {$~~~~~~~s_1$};
    \node [] (X7) [right of=8] {$~~~~~~~s_2$};
\end{tikzpicture}} \\
        (a) & (c) \\
        \multicolumn{2}{c}{\input{fig_SETcountIf}} \\
        \multicolumn{2}{c}{(b)}
    \end{tabular}
    \caption{(a) A program with a function \texttt{countIf(A,n,x)}. (b) Symbolic execution tree of function \texttt{countIf}. (c) Compact symbolic execution tree of function \texttt{countIf}.
    }
    \label{fig:countIf}
\end{figure}

Let us now have a look at Figure~\ref{fig:countIf}~(a) depicting a program with a function \texttt{countIf}. The function counts number of elements in array \texttt{A} having values equal to \texttt{x}. We show the symbolic execution tree of the program at Figure~\ref{fig:countIf}~(b). There we can see several sequences of grey regions. According to our experience with the previous example we can easily detect that all that paths in grey are generated by a single program part consisting of locations $c,d,e,f,g$ and edges between them. But there are two cyclic paths $\pi = c,d,e,f,c$ and $\pi' = c,d,f,c$ inside the part. Nevertheless, the grey regions highlight only the cycle $\pi$. So, we ignore the cycle $\pi'$ and $\pi$ is therefore the only cycle we consider. The remainder is now obvious. The locations $f$ and $g$ are exits from the part and $c$ is the entry location into the part. The analysis of the path (discussed later in Section~\ref{sec:Detection}) computes the following template
\[t = (c, 2, \{ (\theta_f, \varphi_f, [], f)\prm{\kappa}, (\theta_g, \varphi_g, [], g)\prm{\kappa} \} )\]
Compact symbolic execution with the template $t$ computes compact symbolic execution tree depicted at Figure~\ref{fig:countIf}~(c). The tree is basically a single link list. Note that we instantiate the template each time we reach the location $c$. But for each such instantiation we need a fresh parameter to prevent parameter collisions from previous instantiations. We assume we have infinitely many different names for the parameters. Therefore, expressions and program states at Figure~\ref{fig:countIf}~(c) are as follows: $\gamma_g^i  = s_c^i.\theta\ese{t.\varphi_g\prm{\kappa_i}}$, $\gamma_f^i = s_c^i.\theta\ese{t.\varphi_f\prm{\kappa_i}}$, $s_g^i = s_c^i.\theta \circ (t.\theta_g, t.\varphi_g, [], g)\prm{\kappa_i}$ and $s_f^i = s_c^i.\theta \circ (t.\theta_f, t.\varphi_f, [], f)\prm{\kappa_i}$.

The sequences of grey regions in the tree at Figure~\ref{fig:countIf}~(b) goes bottom left. But imagine they would go bottom right. Then each region would represent a sequence of program locations $c,d,f,c$. If we analysed closer these sequences of grey regions, we would realise that there is a part of the program from Figure~\ref{fig:countIf}~(a) consisting of vertices $c,d,f,e,g$, where $c,d,f,c$ is the only cycle in the part, $c$ is the entry location into the part and locations $e$ and $g$ are exits from the part. If we further built a template from the part and run compact symbolic execution with it, we would also receive a compact symbolic execution tree forming basically a single linked list.

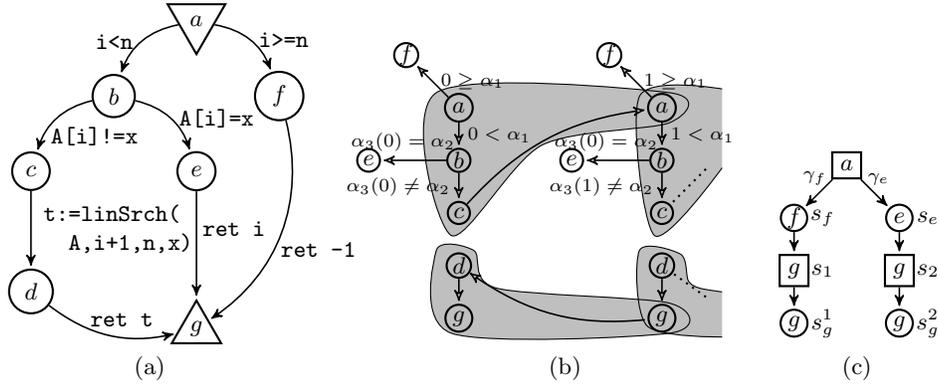
\begin{figure}[!htb]
    \hspace{-1.3cm}
    \begin{tabular}{ccc}
        {\centering
\tikzstyle{start} = [regular polygon,regular polygon sides=3,
regular polygon rotate=180,thick,draw,inner sep=2.0pt]
\tikzstyle{target} = [regular polygon,regular polygon sides=3,
regular polygon rotate=0,thick,draw,inner sep=1pt]
\tikzstyle{loc} = [circle,thick,draw]
\tikzstyle{pre} = [<-,shorten <=1pt,>=stealth',semithick]
\tikzstyle{post} = [->,shorten <=1pt,>=stealth',semithick]
\footnotesize
\begin{tikzpicture}[node distance=1.1cm]
    \node [start] (a) {$a$};
    \node [loc] (b) [left of=a,yshift=-10mm] {$b$}
        edge [pre,bend left] node [label=left:\texttt{i<n}] {} (a);
    \node [loc] (c) [left of=b,yshift=-10mm] {$c$}
        edge [pre,bend left]
                node [label=below:\texttt{~~~~~~A[i]!=x}] {} (b);
    \node [loc] (e) [right of=b,yshift=-10mm] {$e$}
        edge [pre,bend right] node [label=right:\texttt{A[i]=x}] {} (b);
    \node [loc] (d) [below of=c,yshift=-5mm] {$d$}
        edge [pre] node
            [label=right:{
                \begin{tabular}{l}
                    \texttt{t:=linSrch(} \\
                    \texttt{~~A,i+1,n,x)}
                \end{tabular}
            }] {} (c);
    \node [loc] (f) [right of=a,yshift=-10mm] {$f$}
        edge [pre,bend right] node [label=right:\texttt{i>=n}] {} (a);
    \node [target] (g) [below of=e,yshift=-10mm] {$g$}
        edge [pre,bend left=25]
                    node [label=above:\texttt{~~ret t}] {} (d)
        edge [pre] node [label=above:\texttt{~~~~~~ret i}] {} (e)
        edge [pre,bend right=35] node
            [label=below:\texttt{~~~~~~ret -1}] {} (f);
\end{tikzpicture}} & \hspace{-1.2cm}
        {\centering
\tikzstyle{start} = [regular polygon,regular polygon sides=3,
    regular polygon rotate=180,thick,draw,inner sep=0.1pt]
\tikzstyle{target} = [regular polygon,regular polygon sides=3,
    regular polygon rotate=0,thick,draw,inner sep=-0.5pt]
\tikzstyle{loc} = [circle,thick,draw,inner sep=0.5pt]
\tikzstyle{pre} = [<-,shorten <=1pt,>=stealth',semithick]
\tikzstyle{post} = [->,shorten <=1pt,>=stealth',semithick]
\footnotesize
\begin{tikzpicture}[node distance=0.7cm]


    \pgfsetmovetofirstplotpoint

    \pgfplothandlerrecord{\leftuppattern}
    \pgfplotstreamstart
        \pgfplotstreampoint{\pgfpoint{-0.3cm}{0.2cm}}
        \pgfplotstreampoint{\pgfpoint{2.8cm}{0.2cm}}
        \pgfplotstreampoint{\pgfpoint{2.8cm}{-0.2cm}}
        \pgfplotstreampoint{\pgfpoint{1.1cm}{-0.5cm}}
        \pgfplotstreampoint{\pgfpoint{-0.1cm}{-1.7cm}}
    \pgfplotstreamend

    \pgfplothandlerrecord{\rightuppattern}
    \pgfplotstreamstart
        \pgfplotstreampoint{\pgfpoint{3.5cm}{0.25cm}}
        \pgfplotstreampoint{\pgfpoint{2.4cm}{0.15cm}}
        \pgfplotstreampoint{\pgfpoint{2.6cm}{-1.7cm}}
        \pgfplotstreampoint{\pgfpoint{3.5cm}{-0.8cm}}
    \pgfplotstreamend

    \pgfplothandlerrecord{\leftdownpattern}
    \pgfplotstreamstart
        \pgfplotstreampoint{\pgfpoint{-0.3cm}{-1.9cm}}
        \pgfplotstreampoint{\pgfpoint{0.2cm}{-1.9cm}}
        \pgfplotstreampoint{\pgfpoint{0.5cm}{-2.4cm}}
        \pgfplotstreampoint{\pgfpoint{2.8cm}{-2.6cm}}
        \pgfplotstreampoint{\pgfpoint{2.8cm}{-3.0cm}}
        \pgfplotstreampoint{\pgfpoint{-0.1cm}{-3.0cm}}
    \pgfplotstreamend

    \pgfplothandlerrecord{\rightdownpattern}
    \pgfplotstreamstart
        \pgfplotstreampoint{\pgfpoint{3.5cm}{-2.45cm}}
        \pgfplotstreampoint{\pgfpoint{3.2cm}{-2.4cm}}
        \pgfplotstreampoint{\pgfpoint{3.05cm}{-2.3cm}}
        \pgfplotstreampoint{\pgfpoint{2.95cm}{-1.95cm}}
        \pgfplotstreampoint{\pgfpoint{2.4cm}{-1.9cm}}
        \pgfplotstreampoint{\pgfpoint{2.5cm}{-2.9cm}}
        \pgfplotstreampoint{\pgfpoint{3.5cm}{-3.05cm}}
    \pgfplotstreamend

    \pgfplothandlerclosedcurve\leftuppattern
        \pgfsetfillcolor{lightgray!100}\pgfusepath{fill}
    \pgfplothandlercurveto\rightuppattern
        \pgfsetfillcolor{lightgray!100}\pgfusepath{fill}
    \pgfplothandlerclosedcurve\leftdownpattern
        \pgfsetfillcolor{lightgray!100}\pgfusepath{fill}
    \pgfplothandlercurveto\rightdownpattern
        \pgfsetfillcolor{lightgray!100}\pgfusepath{fill}

    \pgfplothandlerclosedcurve\leftuppattern\pgfusepath{stroke}
    \pgfplothandlercurveto\rightuppattern\pgfusepath{stroke}
    \pgfplothandlerclosedcurve\leftdownpattern\pgfusepath{stroke}
    \pgfplothandlercurveto\rightdownpattern\pgfusepath{stroke}

    \node [loc,inner sep=1.5pt] (a1) {$a$};
    \node [loc,inner sep=-0.2pt] (f1) [left of=a1,above of=a1] {$f$}
        edge [pre] node [label=right:{\scriptsize
            $0\geq\alpha_1$}] {} (a1);
    \node [loc] (b1) [below of=a1] {$b$}
        edge [pre] node [label=right:{\scriptsize $0<\alpha_1$}] {} (a1);
    \node [loc,inner sep=1pt] (e1) [left of=b1,xshift=-5mm] {$e$}
        edge [pre] node [label=above:{\scriptsize
            $\alpha_3(0)=\alpha_2~~~$}] {} (b1);
    \node [loc,inner sep=1pt] (c1) [below of=b1] {$c$}
        edge [pre] node [label=left:{\scriptsize
            $\alpha_3(0)\neq\alpha_2$}] {} (b1);
    \node [loc] (d1) [below of=c1] {$d$};
    \node [loc,inner sep=1pt] (g1) [below of=d1] {$g$} edge [pre] node {} (d1);

    \node [loc,inner sep=1.5pt] (a2) [right of=a1,xshift=2cm] {$a$}
        edge [pre,bend right=20] node {} (c1);
    \node [loc,inner sep=-0.2pt] (f2) [left of=a2,above of=a2] {$f$}
        edge [pre] node [label=right:{\scriptsize
            $1\geq\alpha_1$}] {} (a2);
    \node [loc] (b2) [below of=a2] {$b$}
        edge [pre] node [label=right:{\scriptsize $1<\alpha_1$}] {} (a2);
    \node [loc,inner sep=1pt] (e2) [left of=b2,xshift=-5mm] {$e$}
        edge [pre] node [label=above:{\scriptsize
            $\alpha_3(0)=\alpha_2~~~$}] {} (b2);
    \node [loc,inner sep=1pt] (c2) [below of=b2] {$c$}
        edge [pre] node [label=left:{\scriptsize
            $\alpha_3(1)\neq\alpha_2$}] {} (b2);
    \node [loc] (d2) [below of=c2] {$d$};
    \node [loc,inner sep=1pt] (g2) [below of=d2] {$g$}
        edge [pre] node {} (d2)
        edge [post, bend left=20] node {} (d1);

    \node [] (X3) [right of=b2] {} edge [dotted,thick] node {} (c2);
    \node [] (X4) [right of=g2,yshift=2mm] {} edge [dotted,thick] node {} (d2);
\end{tikzpicture}} & \hspace{-0.6cm}
        {\centering
\tikzstyle{loc} = [circle,thick,draw,inner sep=0.5pt]
\tikzstyle{pre} = [<-,shorten <=1pt,>=stealth',semithick]
\tikzstyle{post} = [->,shorten <=1pt,>=stealth',semithick]
\footnotesize
\begin{tikzpicture}[node distance=0.7cm]
    \node [rectangle,thick,draw,inner sep=3pt] (1) {$a$};
    \node [loc,inner sep=0.1pt] (2) [left of=1,below of=1] {$f$}
        edge [pre] node [label=above:{\scriptsize $\gamma_f~$}] {} (1);
    \node [loc,inner sep=1.5pt] (3) [right of=1,below of=1] {$e$}
        edge [pre] node [label=above:{\scriptsize $~\gamma_e$}] {} (1);

    \node [rectangle,thick,draw] (4) [below of=2] {$g$}
        edge [pre] node {} (2);
    \node [loc,inner sep=1pt] (5) [below of=4] {$g$}
        edge [pre] node {} (4);

    \node [rectangle,thick,draw] (6) [below of=3] {$g$}
        edge [pre] node {} (3);
    \node [loc,inner sep=1pt] (7) [below of=6] {$g$}
        edge [pre] node {} (6);

    \node [] (X1) [above of=4] {$~~~~~~~s_f$};
    \node [] (X2) [above of=6] {$~~~~~~~s_e$};
    \node [] (X3) [below of=4] {$~~~~~~~s_g^1$};
    \node [] (X4) [below of=6] {$~~~~~~~s_g^2$};
    \node [] (X5) [above of=5] {$~~~~~~~s_1$};
    \node [] (X6) [above of=7] {$~~~~~~~s_2$};
\end{tikzpicture}}
        \\
        (a) & (b) & (c)
    \end{tabular}
    \caption{(a) A program with a recursive function \texttt{linSrch(A,i,n,x)}. (b) Symbolic execution tree of the recursive function \texttt{linSrch}. (c) Compact symbolic execution tree of the recursive function \texttt{linSrch}.
    }
    \label{fig:linSrchRec}
\end{figure}

Besides cyclic paths, recursive calls also produce real program paths with regularities in program states along them. At Figure~\ref{fig:linSrchRec}~(a) there is a recursive function \texttt{linSrchRec} which is equivalent to the function \texttt{linSrch} discussed before. Symbolic execution tree of the function is depicted at Figure~\ref{fig:linSrchRec}~(b). The root of the tree is the left-most vertex referencing program location $a$. There are two sequences of grey regions. The top sequence represents recursive calls, while the bottom sequence represents returning from the calls. We see that top sequence goes from left to the right. The bottom sequence goes in the opposite direction. We can further see there is one to one correspondence between regions of both sequences. Below each region in the top sequence, there is a single region of bottom sequence. Paths in both sequences of regions are connected in the tree. But this is not shown in the figure. The connection happens, when all the recursive calls are done and some basic case is executed in the recursive function. Then we get to the path of the bottom regions.

Let us first focus on the path at the top sequence of regions. Vertices in each region are related to the same sequence of program locations: $a,b,c,a$. Moreover, we enter the path in a vertex referencing location $a$ and we can leave the path either by stepping into a vertex referencing location $f$ or into a vertex referencing location $e$. If we look at the program (at Figure~\ref{fig:linSrchRec}~(a)), the sequence $a,b,c,a$ forms a cyclic path in it. Of course, the edge $(c,a)$ is not explicit in the program. But we consider it as a meta-edge labelled by an action simulating the effect of the function call, as defined by action of edge $(c,d)$. We now define a program part, say $P_1$, consisting of the cyclic path, the entry location $a$ and two exit locations $f$ and $e$. The part represent the phase of recursive calls of the function \texttt{linSrch}.

Now we similarly analyse the path in bottom sequence of regions. Each region repeats the same sequence of program locations $g,d,g$. The path is entered in vertex referencing location $g$, but there is no exit from the path. The sequence $g,d,g$ of locations forms a cyclic path in the program (at Figure~\ref{fig:linSrchRec}~(a)). Note, that we assume there is an artificial edge $(g,d)$ enclosing the cycle. Action of this edge is supposed to simulate the effect of return from the function call, as defined by action of edge $(c,d)$. We want to define a program part $P_2$ representing the phase of returning from recursive calls. We have the cyclic path and we have the entry location $g$ to the part. But there is no exit from the part. Obviously, the recursive calls ends in location $g$, where we leave the function. Therefore, our exit location is $g$ and we have the program part $P_2$. Note that if we want to formally match the exit location detection algorithm introduced for previous examples, we may imagine there is an edge from $g$ back to $g$ and labelled by \texttt{skip} action.

For the program parts $P_1$ and $P_2$ we compute the following templates $t_1$ and $t_1$ as described in the previous examples.
\begin{align*}
    t_1 &= (a, 2, \{ (\theta_f, \varphi_f, [], f)\prm{\kappa}, (\theta_e, \varphi_e, [], e)\prm{\kappa} \})
    \\
    t_2 &= (g, 1, \{ (\theta_g\prm{\kappa}, \true, [], g) \}).
\end{align*}
Note that the path condition of $t_2$ is simply $\true$, since we cannot escape from the path. In other words, as there is no branching along the path, the path condition cannot be updated from its initial value $\true$. It is important to note, that both templates use exactly the same parameter. The use of the same parameter creates a link between the number of recursive calls and number of returns from them. Having the templates we are able to formulate the template $t$ for the recursive function \texttt{linSrchRec}.
\[ t = (a, 2, \{ (\theta_f, \varphi_f, [], f)\prm{\kappa}, (\theta_e, \varphi_e, [], e)\prm{\kappa} \}, \theta_g\prm{\kappa}, g) \]
The template $t$ contains whole the template $t_1$, but it took only symbolic memory $\theta_g$ and the exit location $g$ from the template $t_2$.

We are ready to start compact symbolic execution with the template $t$. Symbolic execution tree for the program is depicted at Figure~\ref{fig:linSrchRec}~(c). First we step into the program location $a$. The tree contains only the root vertex referencing location $a$. The location $a$ is the entry location of $t$. Hence, we instantiate the first part of $t$ (related to phase of recursive calls, i.e.~related to $t_1$) into the tree. The number $2$ in $t$ identifies, that the root will have two successor vertices referencing locations $f$ and $e$ and they will be labelled by program states $s_f = (\theta_f, \varphi_f, [(t, \kappa)] \circ [], f)\prm{\kappa}$ and $s_e = (\theta_e, \varphi_e, [(t, \kappa)] \circ [], e)\prm{\kappa}$ respectively. Note that we omitted composition of these states with the initial program state labelling the root. We could do that, since composition of initial program state with any other state produces the other state again. Also note that call stacks of both states (i.e.~$[]$) are composed with a call stack containing a single special record of the form $(t, \kappa)$. This type of call stack record is introduced only for templates of recursive functions. First of all, this single record represents any number of subsequent recursive calls done by classic symbolic execution. And the record also saves reference to the template $t$ and the parameter $\kappa$ used in the instantiation. We note that edges from the root to its successors are labelled by expressions $\gamma_f \equiv t.\varphi_f\prm{\kappa}$ and $\gamma_e \equiv t.\varphi_e\prm{\kappa}$. Having computed successors of the root, we continue by classic symbolic execution independently from both of these vertices, until we reach the location $g$. For both the executions we do the same think at the location $g$. Let us consider execution continuing from the successor referencing location $f$. We need to instantiate the second part of the template representing returns from the recursive calls. So, we remove the record $(t, \kappa)$ from the top of the call stack, but we take the template $t$ and the parameter $\kappa$ stored in the record $(t, \kappa)$. In general, between both instantiation parts of a given template, there might be executed any code, there can be instantiated many other templates and there can even be instantiated the same template several times always with different (fresh) parameters. That is why we save the template and the parameter in the stack record. Let $s_1$ be a program state of the current leaf vertex of the tree. We create its only successor vertex labelled with program state $s_g^1 = (s_1.\theta \circ t.\theta\prm{\kappa_1}, s_1.\varphi, [], g)$. We see that there are two differences between states $s_1$ and $s_g^1$. First of all call stack of $s_g^1$ does not contain the special record $(t, \kappa)$ as we have popped it from the stack. And second, the symbolic memory of $s_g^1$ is the composition $s_1.\theta \circ t.\theta\prm{\kappa_1}$. Further classic symbolic execution form the vertex terminates, since we are leaving exit location of the starting function. We proceed similarly for the other run of symbolic execution (from the second successor of the root), where we get the final program state $s_g^2 = (s_2.\theta \circ t.\theta\prm{\kappa}, s_2.\varphi, [], g)$.

To summarise, a general scheme for compact symbolic execution of the examples above is as follows. We enumerate parts in a given program producing paths with regularities in program states along them. Such sources are mainly cyclic paths and pairs of cyclic paths representing recursion. For each enumerated part we compute a template. Then we run compact symbolic execution with the computed templates.

\section{Definition} \label{sec:Definition}

In this section, we give precise definition of templates parametrised by a single parameter. Templates for recursion consists of two parts instantiated independently into symbolic execution tree. These instances share the same parameter. We therefore show a process of information passing between different instances of the same template. And we further present compact symbolic execution algorithm using templates with one parameter with possible information exchange between instances. We start with basic terms valid for compact symbolic execution with any kinds of templates. We assume for the rest of this section that $P$ is a program.

An injective function $\Theta$ from a set of all program variables of $P$ to a set of \emph{symbols} $\{ \alpha_0, \alpha_1, \alpha_2, \ldots \}$ is an \emph{initial symbolic memory} of $P$. For each program variable \var{a} its symbol $\Theta(\var{a})$ represents some yet unknown value of that variable. So, $\Theta(\var{a})$ must belong to a domain of \var{a} (i.e.~$\Theta(\var{a})$ is of \var{a}'s type). Further, \emph{numeric symbolic expression} is application of operators to numeric constants and symbols. \emph{Boolean symbolic expression} is either an equality or inequality predicate over numeric symbolic expressions, or an application of logical connectives to other boolean symbolic expressions. \emph{Symbolic expression} is either numeric or boolean symbolic expression. We have already given the definition of symbolic memory, call stack and program state in Section~\ref{sec:Overview}. But we in addition define for any program state $s = (\theta, \varphi, \Xi, l)$ that $\theta(\var{a}) = \Theta(\var{a})$, for each local variable \var{a} undefined at location $l$. Also note that $\Theta$ is just a special symbolic memory.


The pseudo-code of Algorithm~\ref{alg:executeSymbolically} represents two algorithms. If we consider only unmarked lines, we get algorithm of classic symbolic execution. If we add lines marked with $\Box$ we get algorithm of compact symbolic execution with templates with a single parameter. The lines marked by $*$ are responsible for construction of symbolic execution tree. Obviously, both classic and compact symbolic executions can appear at both versions: with and without construction of the tree.

\begin{algorithm}[!h]
    \caption{\texttt{executeSymbolically}} \label{alg:executeSymbolically}
    \KwIn{$P$ - program to be executed \KwInSep
          $d$ - set of template detectors (only in $\Box$-version)}
    \KwOut{$E$ - set of final program states \KwOutSep
           $T$ - symbolic execution tree of $P$ (only in $*$-version)}
    \BlankLine
    \DontPrintSemicolon
    \lmark{$\Box$}
    Let $p$ be a set of all templates detected in $P$ by detectors $d$ \;
        \label{alg:executeSymbolically:DetectTemplates}
    $s_0 \asgn (\Theta, \true, [], $ entry location of the starting function$)$ \;
        \label{alg:executeSymbolically:InitialState}
    Le $Q$ be a queue of program states initially containing only $s_0$ \;
    \lmark{$*$}
    Create a root vertex of $T$ labelled with $s_0$ \;
        \label{alg:executeSymbolically:Root}
    \Repeat{\rm $Q$ becomes empty}{ \label{alg:executeSymbolically:LoopBegin}
        Extract the first program state $s$ from $Q$ \;
        \If{\rm $s.l$ is the exit location of the starting function or
                an error location}{
                \label{alg:executeSymbolically:FinalState}
            Insert $s$ into $E$ \;
        }
        \Else{
            $S \asgn \emptyset$ \;
            \lmark{$\Box$}
            \If(/* returning from recursion */)
            {\rm \texttt{top(}$s.\Xi$\texttt{)} $ = (t, \kappa)
                    \wedge s.l = s.\Xi.t.l'$}{
                    \label{alg:executeSymbolically:BoxesBegin}
                \lmark{$\Box$}
                $t \asgn s.\Xi.t$ \;
                    \label{alg:executeSymbolically:Unwinding}
                \lmark{$\Box$}
                $\kappa \asgn s.\Xi.\kappa$ \;
                    \label{alg:executeSymbolically:UnwindingGetKappa}
                \lmark{$\Box$}
                Replace all occurrences of the former parameter in
                    $t$ by $\kappa$\;
                \lmark{$\Box$}
                $s' \asgn (s.\theta \circ t.\theta\prm{\kappa}, s.\varphi,
                    \texttt{pop(}s.\Xi\texttt{)}, t.l')$ \;
                    \label{alg:executeSymbolically:SuccessorInUnwinding}
                    \label{alg:executeSymbolically:Pop}
                \lmark{$\Box$}
                Insert $s'$ into $S$\;
            }
            \lmark{$\Box$}
            \Else{
                \lmark{$\Box$}
                $p' \asgn \texttt{getTemplatesAt(}s.l, p\texttt{)}$ \;
                    \label{alg:executeSymbolically:GetTemplatesAt}
                \lmark{$\Box$}
                \If {$p' \neq \emptyset$}{
                    \lmark{$\Box$}
                    $t \asgn \texttt{chooseTemplate(}p'\texttt{)}$ \;
                        \label{alg:executeSymbolically:ChooseTemplate}
                    \lmark{$\Box$}
                    $\kappa \asgn \texttt{getFreshParam()}$ \;
                        \label{alg:executeSymbolically:FreshKappa}
                    \lmark{$\Box$}
                    Replace all occurrences of the former parameter in
                        $t$ by $\kappa$\;
                    \lmark{$\Box$}
                    \If(/* recursive calling */)
                    {\rm $t$ is a recursion template}{
                        \lmark{$\Box$}
                        \ForEach{$i = 1, \ldots, t.n$}{
                                \label{alg:executeSymbolically:Calling}
                            \lmark{$\Box$}
                            $s' \asgn s \circ (t.\theta_i, t.\varphi_i,
                                    [(t,\kappa)] \circ t.\Xi_i,
                                    t.l_i)\prm{\kappa}$\;
                              \label{alg:executeSymbolically:SuccessorInCalling}
                              \label{alg:executeSymbolically:Push}
                            \lmark{$\Box$}
                            Insert $s'$ into $S$\;
                        }
                    }
                    \lmark{$\Box$}
                    \Else(/* $t$ is a general template with one parameter */){
                        \lmark{$\Box$}
                        \ForEach{$i = 1, \ldots, t.n$}{
                                \label{alg:executeSymbolically:Unrolling}
                            \lmark{$\Box$}
                            $s' \asgn s \circ (t.\theta_i, t.\varphi_i,
                                               t.\Xi_i, t.l_i)\prm{\kappa}$\;
                            \label{alg:executeSymbolically:SuccessorInUnrolling}
                            \lmark{$\Box$}
                            Insert $s'$ into $S$\;
                        }
                    }
                }
                \lmark{$\Box$}
                \Else(/* applying classic symbolic execution step */){
                        \label{alg:executeSymbolically:BoxesEnd}
                    $S \asgn \texttt{computeClassicSuccessors(}P,s\texttt{)}$ \;
                        \label{alg:executeSymbolically:ClassicStep}
                }
            }
            \lmark{$*$}
            Let $u$ be a leaf of $T$ whose label is $s$ \;
                    \label{alg:executeSymbolically:GetLeaf}
            \ForEach{\rm program state $s' \in S$ such that
                            $s'.\varphi$ is satisfiable}{
                    \label{alg:executeSymbolically:AddSATSuccessors}
                Insert $s'$ at the end of $Q$ \;
                \lmark{$*$}
                Insert a new vertex $v$ labeled with $s'$ into $T$ \;
                    \label{alg:executeSymbolically:CreateVertex}
                \lmark{$*$}
                Insert an edge $(u,v)$ into $T$ \;
                    \label{alg:executeSymbolically:InsertEdge}
            }
        }
    } \label{alg:executeSymbolically:LoopEnd}
    \Return{$E$ \\ \lmark{$*$} \hspace{1.05cm} $T$} \;
\end{algorithm}

We now describe the algorithm of classic symbolic execution. At line~\ref{alg:executeSymbolically:InitialState}, there we create initial program state and then we insert it into a queue $Q$. The queue $Q$ keeps all program states for which we have not been computing successor program states yet. Until $Q$ becomes empty, we iterate the loop at lines~\ref{alg:executeSymbolically:LoopBegin}--\ref{alg:executeSymbolically:LoopEnd}. At line~\ref{alg:executeSymbolically:FinalState} we detect whether actually processed program state $s$ is final or not. If it is not, we compute its successors at line~\ref{alg:executeSymbolically:ClassicStep}. In short, the function \texttt{computeClassicSuccessors} either executes actions of out-edges from location $s.l$ or it resolves return from a call, if $s.l$ is a function exit location. We already gave an intuition how to symbolically execute actions at the beginning of Section~\ref{sec:Overview}. We further see at line~\ref{alg:executeSymbolically:AddSATSuccessors} that we discard all successors of $s$, whose path conditions are not satisfiable. Discarded states do not represent real behaviour of the program.

Now we focus on $*$-version of the algorithm. We create root of the tree labelled by the initial program state at line~\ref{alg:executeSymbolically:Root}. When processing a state $s$ inside the loop we take the only leaf in the tree labelled with $s$ at line~\ref{alg:executeSymbolically:GetLeaf}. We compute its successor vertices at lines~\ref{alg:executeSymbolically:CreateVertex} and \ref{alg:executeSymbolically:InsertEdge}. Note that the successors are labelled by successor states of $s$.

We have to postpone description of $\Box$-version of the algorithm, until we have properly defined templates the algorithm uses. The first step toward the definition is introduction of parameters and their substitution.


We distinguish a set $\{ \kappa, \tau, \kappa_1, \tau_1, \kappa_2, \tau_2, \ldots \}$ of variables called \emph{parameters}, ranging over non-negative integers. We extend numeric symbolic expressions such that they may also contain application of operators to parameters. We allow boolean symbolic expression to contain quantification of parameters. We further naturally extend symbolic memories, call stacks and program states to contain symbolic expressions with parameters. When we want to emphasise that $\bld{\kappa}$ is a set of all parameters appearing in a symbolic expression $\varphi$, we denote it as $\varphi\prm{\bld{\kappa}}$. And if we want to emphasise that a symbolic expression $\varphi$ does not contain any parameter, we denote it as $\varphi\prm{}$. We naturally extend the notations above for symbolic memories, call stacks and program states.

We now describe substitution of parameters. Each function from a finite set of parameters to non-negative integers is \emph{valuation}. Let $\varphi\prm{\bld{\kappa}}$, $\theta\prm{\bld{\kappa}}$, $\Xi\prm{\bld{\kappa}}$ and $s\prm{\bld{\kappa}}$ be a symbolic expression, a symbolic memory, a call stack and a program state respectively, $\bld{\kappa} \neq \emptyset$ and $\bld{\nu}$ be a valuation defined for all parameters in $\bld{\kappa}$. Then we compute $\varphi\prm{\bld{\nu}}$ from $\varphi\prm{\bld{\kappa}}$ such that we substitute all parameters in $\varphi$ by related integers in $\bld{\nu}$. We compute $\theta\prm{\bld{\nu}}$ from $\theta\prm{\bld{\kappa}}$ such that we substitute all parameters in all the expressions in $\theta$ by related integers in $\bld{\nu}$. Substitution of call stack parameter is a bit more complicated,
since we introduced the special form $(t, \kappa)$ of a stack record in the last example of Section~\ref{sec:Overview}. Therefore, to prepare ground for stack equivalence, we compute $\Xi\prm{\bld{\nu}}$ from $\Xi\prm{\bld{\kappa}}$ in the following two steps: (1) We update each record $(\sigma, l)$ of the call stack $\Xi$ to $(\sigma\prm{\bld{\nu}}, l)$ (note that $\sigma$ is basically symbolic memory, only restricted to local variables). (2) Each record of the special form $(t, \kappa)$  in the call stack form the previous step is either discarded, if $\bld{\nu}(\kappa) = 0$, or it is replaced by $\bld{\nu}(\kappa)$ records $(\bot,\bot)$, where symbol $\bot$ represent \emph{any} possible content. Therefore, the record $(\bot,\bot)$ represents any possible stack record (the first $\bot$ in the record represents any possible content of $\sigma$ and the second $\bot$ represents any possible program location).

We often use the following simplified notation. If an expression $\varphi$ contains exactly one parameter $\kappa$ and a $\{ (\kappa,\nu) \}$ is a valuation, then we write $\varphi\prm{\kappa}$ and $\varphi\prm{\nu}$ instead of $\varphi\prm{\{\kappa\}}$ and $\varphi\prm{\{(\kappa,\nu)\}}$ respectively. The notation also applies to symbolic memories, call stacks and program states.

Next we define composition of program states and equivalence between them. We also express some basic equivalences for compositions.

\begin{dfn}[Composition]
    Let $\Xi = [r_0, \ldots, r_m]$ and $\Xi' = [r_0', \ldots, r_n']$ be call stacks and $s = (\theta, \varphi, \Xi, l)$ and $s' = (\theta', \varphi', \Xi', l')$ be program states. Then composite program state $s \circ s' = (\theta \circ \theta',~\varphi \wedge \theta\ese{\varphi'}, \Xi \circ (\theta \circ \Xi'), l')$, where $\theta\ese{\varphi'}$ is a symbolic expression constructed from $\varphi'$ such that all symbols $\alpha_i$ in $\varphi'$ are simultaneously substituted by symbolic expressions $\theta(\Theta^{-1}(\alpha_i))$, $\theta \circ \theta'$ is a symbolic memory such that for each variable \var{a} we have $(\theta \circ \theta')(\var{a}) = \theta\ese{\theta'(\var{a})}$, $\theta \circ \Xi' = [\bar{r}_0', \ldots, \bar{r}_n']$, where each $\bar{r}_i'$ is equal to $r_i$ except the first component being $\bar{r}_i'.\sigma = \theta \circ r_i'.\sigma$, and $\Xi \circ (\theta \circ \Xi') = [r_0, \ldots,$ $ r_m, \bar{r}_0', \ldots, \bar{r}_n']$, .
\end{dfn}

\begin{dfn}[Equivalence]
    Let $\varphi$, $\varphi'$ be symbolic expressions, $\theta$, $\theta'$ be symbolic memories, $\Xi = [r_0, \ldots, r_m]$, $\Xi' = [r_0', \ldots, r_n']$ be call stacks and $s$, $s'$ be program states. Then $\varphi \equiv \varphi'$, if $\varphi$ and $\varphi'$ are either logically equivalent boolean symbolic expressions or numeric symbolic expressions such that $(\varphi = \varphi') \equiv \true$. $\theta \equiv \theta'$, if for each variable \var{a} we have $\theta(\var{a}) \equiv \theta'(\var{a})$. $\Xi \equiv \Xi'$, if $m = n$ and for each $i \in \{0, \ldots, m\}$ we have $r_i.\sigma$ and $r_i'.\sigma$ are defined for the same variables with equivalent values and $r_i.l = r_i'.l$. And $s \equiv s'$, if both $s$ and $s'$ have equal or equivalent components.
\end{dfn}

When returning from a function call, values of local variables are discarded. Therefore, if we have two program states at the same exit location of a function, we may restrict equivalence between symbolic memories of these states only to global variables. Therefore we define also the following equivalence between program states.

\begin{dfn}[Equivalence on Global Variables]
    Let $s$ and $s'$ be program states. Then $s$ is equivalent on global variables with $s'$, written by $s \overset{g}{\equiv} s'$, if they have equal or equivalent components except one with symbolic memories, where for each global variable \var{a} we require $s.\theta(\var{a}) \equiv s'.\theta(\var{a})$.
\end{dfn}

We summarise basic equivalences between composed program states in the following lemma. We do not provide proof since the equivalences is mostly obvious or easy to check.

\begin{lem}[Equivalent Compositions]
    Let $s, s'$ and $s''$ be program states, $\bld{\nu}$ and $\bld{\nu}'$ be valuations of all parameters in $s$ and $s'$ respectively such that $\bld{\nu} \cup \bld{\nu}'$ is also a valuation, $\theta$, $\theta'$ and $\theta''$ be symbolic memories and $\varphi$ and $\psi \wedge \psi'$ be symbolic expressions. Then $s \circ (s' \circ s'') \equiv (s \circ s') \circ s''$, $s\prm{\bld{\nu}} \circ s'\prm{\bld{\nu}'} \equiv (s \circ s')\prm{\bld{\nu} \cup \bld{\nu}'}$, $\theta \circ (\theta' \circ \theta'') \equiv (\theta \circ \theta') \circ \theta''$, $(\theta \circ \theta')\ese{\varphi} \equiv \theta\ese{\theta'\ese{\varphi}}$ and $\theta\ese{\psi} \wedge \theta\ese{\psi'} \equiv \theta\ese{\psi \wedge \psi'}$.
\end{lem}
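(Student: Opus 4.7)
My plan is to prove the five equivalences in the order that lets each earlier one feed the next, since the associativity of composition of full program states is a direct consequence of corresponding associativities at the level of symbolic expressions, memories and path conditions.

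First I would handle the two ``substitution'' lemmas in isolation. For $\theta\ese{\psi} \wedge \theta\ese{\psi'} \equiv \theta\ese{\psi \wedge \psi'}$, the operator $\theta\ese{\cdot}$ is by definition the simultaneous substitution of each symbol $\alpha_i$ in its argument by $\theta(\Theta^{-1}(\alpha_i))$. Since substitution descends through boolean connectives term by term, distributivity over $\wedge$ is immediate from the definition. Next I would prove $(\theta \circ \theta')\ese{\varphi} \equiv \theta\ese{\theta'\ese{\varphi}}$ by structural induction on $\varphi$. The only nontrivial base case is $\varphi = \alpha_i$: by the definition of memory composition we have $(\theta \circ \theta')(\var{a}) = \theta\ese{\theta'(\var{a})}$ for every variable $\var{a}$, so in particular with $\var{a} = \Theta^{-1}(\alpha_i)$ we obtain $(\theta\circ\theta')\ese{\alpha_i} = \theta\ese{\theta'\ese{\alpha_i}}$. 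The inductive step is routine because substitution commutes with every operator and connective.

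With those in hand, $\theta \circ (\theta' \circ \theta'') \equiv (\theta \circ \theta') \circ \theta''$ is a one-line consequence: evaluating both sides on a variable $\var{a}$ gives $\theta\ese{\theta'\ese{\theta''(\var{a})}}$ and $(\theta\circ\theta')\ese{\theta''(\var{a})}$ respectively, and these are equivalent by the lemma on $(\theta\circ\theta')\ese{\cdot}$ just proved. The associativity of full program-state composition $s\circ(s'\circ s'') \equiv (s\circ s')\circ s''$ then reduces to checking associativity componentwise. The memory slot is covered by what we just did; the location slot is $l''$ on both sides; the path-condition slot unfolds to $\varphi \wedge \theta\ese{\varphi' \wedge \theta'\ese{\varphi''}}$ on one side and $(\varphi\wedge\theta\ese{\varphi'}) \wedge (\theta\circ\theta')\ese{\varphi''}$ on the other, and these agree by distributivity of $\theta\ese{\cdot}$ over $\wedge$, by $\theta\ese{\theta'\ese{\varphi''}} \equiv (\theta\circ\theta')\ese{\varphi''}$, and by associativity of $\wedge$. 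The call-stack slot is the most tedious: after expanding $\Xi \circ (\theta\circ \Xi')$ and its analogue on the other side, each record's memory component reduces, after applying composition again, to the same three-level nested substitution, which is handled by the memory-associativity step.

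For the remaining identity $s\prm{\bld{\nu}} \circ s'\prm{\bld{\nu}'} \equiv (s \circ s')\prm{\bld{\nu} \cup \bld{\nu}'}$, the key observation is that parameters and symbols are disjoint families of variables, and valuations touch only parameters while $\theta\ese{\cdot}$ touches only symbols; hence the two substitutions commute at the level of symbolic expressions, and the identity follows by a straightforward induction mirroring the definition of $\circ$. Because $\bld{\nu}$ mentions no parameter of $s'$ and $\bld{\nu}'$ mentions no parameter of $s$, applying either in either order yields the same result. The only subtlety worth flagging is the special stack record $(t,\kappa)$: the valuation step that either discards it (when $\bld{\nu}(\kappa) = 0$) or replaces it by $\bld{\nu}(\kappa)$ copies of $(\bot,\bot)$ must be checked to be compatible with the composition, and this follows because the replacement occurs before the concatenation ``$\Xi\circ(\theta\circ\Xi')$'' sees it.

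The main obstacle I expect is the call-stack component of associativity: keeping the indices straight while expanding $\theta\circ\Xi'$ and then composing with $\Xi$ (and its analogue on the other side) is mechanically the longest step. Everything else is algebra driven by the two substitution lemmas proved at the start.
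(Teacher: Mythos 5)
Your proof is correct. The paper itself gives no proof of this lemma (the author states only that the equivalences are ``mostly obvious or easy to check''), so there is nothing to compare against; your decomposition --- establishing $\theta\ese{\psi}\wedge\theta\ese{\psi'} \equiv \theta\ese{\psi\wedge\psi'}$ and $(\theta\circ\theta')\ese{\varphi}\equiv\theta\ese{\theta'\ese{\varphi}}$ first by unfolding the substitution, deriving memory associativity from the latter, and then checking state associativity componentwise (with the call stack reducing to memory associativity applied record by record) --- is precisely the calculation the paper implicitly relies on in the equivalence chains of the soundness and completeness proofs. One small imprecision: for $s\prm{\bld{\nu}} \circ s'\prm{\bld{\nu}'} \equiv (s \circ s')\prm{\bld{\nu} \cup \bld{\nu}'}$ you justify commutation of the two substitutions by asserting that $\bld{\nu}$ mentions no parameter of $s'$ and vice versa, but the hypothesis only guarantees that $\bld{\nu}\cup\bld{\nu}'$ is a function, i.e.\ that the two valuations agree on any shared parameter; that weaker fact is all you need (and all you are entitled to), since disjointness of parameters and symbols already gives commutation of valuation with $\theta\ese{\cdot}$. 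Otherwise the argument, including your flag on how the special record $(t,\kappa)$ interacts with valuation of a composed stack, is sound.
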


Before we formulate a definition of templates with one parameter we give its intuition. Let us consider a part of the program $P$ with an entry location $e$ and $n$ distinct exit locations $x_1, \ldots, x_n$. We saw in Section~\ref{sec:Overview}, that key properties for building a template of the part are program states $s_1\prm{\kappa}, \ldots, s_n\prm{\kappa}$ at exit locations $x_1, \ldots, x_n$. We need to ensure that states $s_i\prm{\kappa}$ correctly represent behaviour of the analysed part. King proved~\cite{Kin76} that path conditions at leaf vertices of symbolic execution tree $T$ of $P$ are satisfiable. Therefore, if $s_i.\varphi$ is not satisfiable, then there cannot be a path in $T$ traversing the part form $e$ to $x_i$. The exit $x_i$ is thus useless for the construction of the template and we omit it. King further showed~\cite{Kin76} that for two different leaf vertices $u$ and $v$ of $T$ we have $u.\varphi \wedge v.\varphi \equiv \false$. This statement is also valid for program parts. So, we require $(s_i.\varphi \wedge s_j.\varphi) \equiv \false$ for all different $i$ and $j$. We summarise these requirements in the following definition.

\begin{dfn}[Templates with one parameter] \label{def:Templates}
    Let $T$ be symbolic execution tree of $P$ computed by $*$-version of Algorithm~\ref{alg:executeSymbolically}, $n>0$ be an integer, $l, l', l_1, \ldots, l_n$ be locations in $P$, $\kappa$ be a parameter, $\theta\prm{\kappa}, \theta_1\prm{\kappa}, \ldots,$ $\theta_n\prm{\kappa}$ be symbolic memories $\varphi_1\prm{\kappa}, \ldots, \varphi_n\prm{\kappa}$ be satisfiable boolean symbolic expressions such that for each $i,j \in \{ 1, \ldots, n \}, i \neq j$ we have $(\varphi_i \wedge \varphi_j) \equiv \false$ and let $\Xi_1\prm{\kappa}, \ldots, \Xi_n\prm{\kappa}$ be call stacks.

    A tuple $t = (l, n, \{ (\theta_1, \varphi_1, \Xi_1, l_1), \ldots, (\theta_n,\varphi_n, \Xi_n, l_n) \})$ is a \emph{template with one parameter $\kappa$} in $P$, if
    \begin{itemize}
        \item[(L1)] All the locations $l, l_1, \ldots, l_n$ in $t$ are neither entry nor exit ones.

        \item[(L2)] For each path $\pi = u \omega$ in $T$ from any vertex $u$ satisfying $u.l = t.l$ to a leaf, there is a vertex $w \in \omega$, an index $i \in \{ 1, \ldots, n \}$ and an integer $\nu \geq 0$, such that $w.s \equiv u.s \circ (t.\theta_i,~t.\varphi_i,~t.\Xi_i,~t.l_i)\prm{\nu}$.

        \item[(L3)] For each vertex $u$ of $T$, an index $i \in \{ 1, \ldots, n \}$  and non-negative integer $\nu$ such that $u.l = t.l$ and $(u.\varphi \wedge u.\theta\ese{t.\varphi_i\prm{\nu}})$ is satisfiable, there is a successor $w$ of $u$ in $T$ such that $w.s \equiv u.s \circ (t.\theta_i,~t.\varphi_i,~t.\Xi_i,~t.l_i)\prm{\nu}$.
    \end{itemize}

    A tuple $t = (l, n, \{ (\theta_1, \varphi_1, \Xi_1, l_1), \ldots, (\theta_n,\varphi_n, \Xi_n, l_n) \}, \theta, l')$ is a  \emph{recursion template with one parameter $\kappa$} in $P$, if
    \begin{itemize}
        \item[(R1)] $t.l$ and $t.l'$ are entry and exit locations of the same function respectively and $t.l'$ is the target vertex of an edge with a call action of that function. All the locations $l_1, \ldots, l_n$ in $t$ are neither entry nor exit ones.

        \item[(R2)] For each path $\pi = u \omega$ in $T$ from any vertex $u$ satisfying $u.l = t.l$ to a leaf, there is a non-leaf vertex $w \in \omega$, an index $i \in \{ 1, \ldots, n \}$ and an integer $\nu \geq 0$, such that $w.s \equiv u.s \circ (t.\theta_i,~t.\varphi_i,~[(t,\kappa)] \circ t.\Xi_i,~t.l_i)\prm{\nu}$.

        Further, if there is the first successor $\bar{w}$ of $w$ in $\pi$ such that $\bar{w}.l = t.l'$ and $\bar{w}.\Xi = w.\Xi$, then there is a non-leaf vertex $\bar{u}$ in a suffix of $\pi$ starting with $\bar{w}$ such that $\bar{u}.s \overset{g}{\equiv} (\bar{w}.\theta \circ t.\theta\prm{\nu}, \bar{w}.\varphi, u.\Xi, t.l')$.

        \item[(R3)] For each vertex $u$ of $T$, an index $i \in \{ 1, \ldots, n \}$  and non-negative integer $\nu$ such that $u.l = t.l$ and $(u.\varphi \wedge u.\theta\ese{t.\varphi_i\prm{\nu}})$ is satisfiable, there is a successor $w$ of $u$ in $T$ such that $w.s \equiv u.s \circ (t.\theta_i,~t.\varphi_i,~[(t,\kappa)] \circ t.\Xi_i,~t.l_i)\prm{\nu}$.
    \end{itemize}
\end{dfn}

Note that requirements (L2) and (R2) guarantees that no path in $T$ with vertices $u$ and $v$ such that $u.l = l$ and $v.l = l_i$ is suppressed by the state $(t.\theta_i,~t.\varphi_i,~t.\Xi_i,~t.l_i)\prm{\kappa}$. And requirement (L3) and (R3) guarantees that program state $(t.\theta_i,~t.\varphi_i,~t.\Xi_i,~t.l_i)\prm{\kappa}$ does not produce unreal paths. Also note that in requirement (R2) there we use restriction of equivalence to global variables for the phase of returning from recursive calls. Since values of local variables are not important when returning from a function call, the restriction may help to simplify detection of a recursion template.

We are ready to describe $\Box$-version at Algorithm~\ref{alg:executeSymbolically}.
%
%
At line~\ref{alg:executeSymbolically:DetectTemplates} there we detect templates with one parameter in the passed program $P$. That is a task for so called \emph{template detectors}. We discuss a possible construction of such a detector in Section~\ref{sec:Detection}. The only purpose of  lines~\ref{alg:executeSymbolically:BoxesBegin}--\ref{alg:executeSymbolically:BoxesEnd} is to compute successor states of a currently processed program state $s$. Let us first assume the test at line~\ref{alg:executeSymbolically:BoxesBegin} is $\false$. So, we get to line~\ref{alg:executeSymbolically:GetTemplatesAt}. There we call a system function \texttt{getTemplatesAt}, which selects those templates, whose entry locations matches the actual program location $s.l$. If the selection is not empty we may instantiate one of the selected templates. A system function \texttt{chooseTemplate} is supposed to choose exactly one template $t$ to be instantiated. We may for example choose randomly. We do not put any constraints to the selection strategy. To prevent parameter collisions we first get a fresh one at line~\ref{alg:executeSymbolically:FreshKappa} and then we replace the parameter used in $t$ by default by the fresh one. Now we have two possibilities. Either $t$ is a recursion template or not. In the first case we get to a loop at line~\ref{alg:executeSymbolically:Calling}. There we create $t.n$ successors of the program state $s$ (see line~\ref{alg:executeSymbolically:SuccessorInCalling}). Note that call stack of $i$-th successor state is of the form $s.\Xi \circ [(t, \kappa)] \circ t.\Xi_i$. It means that the special record is at the position in the stack, when we entered the recursive function. The only special record $(t, \kappa)$ in the call stack represents \emph{any} possible number of subsequent recursive calls in classic symbolic execution. The record also saves reference to the template $t$ and the parameter $\kappa$ for the later phase of returning from the recursive calls. If $t$ is not a recursion template, then it must be our general purpose template with one parameter (since we do not consider any other kinds of templates in this paper). So we get to line~\ref{alg:executeSymbolically:Unrolling} in the algorithm. There we also create successors of the program state $s$ (see line~\ref{alg:executeSymbolically:SuccessorInUnrolling}). It remains to discuss the computation of successors, when the condition at line~\ref{alg:executeSymbolically:BoxesBegin} is $\true$. The condition says that the location $s.l$ references exit location of a function and that there is the special record $(t, \kappa)$ at the top of the call stack $s.\Xi$. In other words, we reached the moment, when we have to return from recursive calls. We first retrieve the recursive template and the parameter used in the instantiation of $t$ (see lines~\ref{alg:executeSymbolically:Unwinding} and \ref{alg:executeSymbolically:UnwindingGetKappa}). After substitution of the default parameter by the retrieved one, we finish the instantiation of $t$ by computing the only successor of the actual state. The successor state represents the effect of all the returns from recursive calls done previously. This is ensured by using of the same parameter form both phases of the instantiation of the template $t$. A number of recursive calls therefore matches the number of returns form them. Also note that call stack of the successor does not contain the special record. We finish the description of the algorithm by the following observation. The expressions computing successor states at lines~\ref{alg:executeSymbolically:SuccessorInUnwinding}, \ref{alg:executeSymbolically:SuccessorInCalling} and \ref{alg:executeSymbolically:SuccessorInUnrolling} precisely match corresponding expressions in Definition~\ref{def:Templates}. Note that at line~\ref{alg:executeSymbolically:SuccessorInUnwinding} there the call stack \texttt{pop(}$s.\Xi$\texttt{)} must be equal to one of a program state, for which we previously get to line~\ref{alg:executeSymbolically:SuccessorInCalling}. And this program state had to be related to the entry location of a function causing the recursive calls.

\section{Soundness and Completeness} \label{sec:SoundnessCompleteness}

In this section we formulate and prove soundness and completeness theorems for compact symbolic execution using recursive and general templates with one parameter. The theorems say that both classic and compact symbolic execution explore the same set of real paths of $P$. To avoid repetitions we assume for the remainder of this section that $P$ is a program, and $T$ and $T'$ are symbolic execution trees of the program $P$ computed by $*$- and $\Box,*$-versions of Algorithm~\ref{alg:executeSymbolically} respectively.

\begin{lem} \label{lem:NonAdjacentPushes}
    Call stack records pushed at line~\ref{alg:executeSymbolically:Push} of Algorithm~\ref{alg:executeSymbolically} cannot be adjacent in call stacks of vertices of $T'$.
\end{lem}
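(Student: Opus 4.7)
My plan is a proof by contradiction. Suppose two records $r_1$ and $r_2$, both inserted by some execution of line~\ref{alg:executeSymbolically:Push}, sit in adjacent positions of $v.\Xi$ for some vertex $v \in T'$, with $r_2 = (t_2, \kappa_2)$ pushed after $r_1 = (t_1, \kappa_1)$ in the run of the algorithm.

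The first thing I would establish is a ``stack hygiene'' observation: Algorithm~\ref{alg:executeSymbolically} only ever modifies a call stack at its top. Classic steps (line~\ref{alg:executeSymbolically:ClassicStep}) perform at most one push or pop of a regular $(\sigma, l)$ record, line~\ref{alg:executeSymbolically:Pop} pops exactly one special record, and the template-instantiation successors at lines~\ref{alg:executeSymbolically:SuccessorInCalling} and \ref{alg:executeSymbolically:SuccessorInUnrolling} append $[(t,\kappa)] \circ t.\Xi_i$ or $t.\Xi_i$ on top via composition (the composition with $s.\theta$ only rewrites $\sigma$-components of ordinary records and leaves the special records untouched). Consequently, from the moment $r_2$ is pushed until $r_2$ would eventually be popped, the record directly below $r_2$ in the stack of every descendant vertex equals whatever was on top of the processed stack at the instant of the push. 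In particular, $r_1$ must equal that top.

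The second step would pin down what that top can be. Line~\ref{alg:executeSymbolically:Push} fires only while processing a vertex $u$ with $u.l = t_2.l$, which by (R1) is the entry location of a function. I would run a case analysis on how $u$ was produced. If $u$ is the root of $T'$, then $u.\Xi = []$ and no record sits below $r_2$ at all. If $u$ was created by a classic step at line~\ref{alg:executeSymbolically:ClassicStep}, then the only classic action whose target is a function entry is a function call, and classic call semantics places a regular $(\sigma, l)$ record on top rather than a special one. If instead $u$ was produced at one of lines~\ref{alg:executeSymbolically:SuccessorInUnwinding}, \ref{alg:executeSymbolically:SuccessorInCalling}, or \ref{alg:executeSymbolically:SuccessorInUnrolling}, then $u.l$ equals $t'.l'$ or $t'.l_i$ for some template $t'$, but by (L1) and (R1) such a location is neither an entry nor an exit of any function, contradicting $u.l$ being a function entry, so this case is vacuous. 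In every admissible case the top of $u.\Xi$ is either empty or a regular record, never a record previously pushed at line~\ref{alg:executeSymbolically:Push}. This contradicts $r_1$ occupying the position directly below $r_2$. The symmetric orientation, with $r_1$ directly above $r_2$, is handled identically by applying the same analysis to the push that produced $r_1$.

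The only genuine obstacle I foresee is the bookkeeping of the first step: one must carefully verify that none of the compositions performed during template instantiation splice records into the interior of an existing call stack or alter the special records already in it. Once that is clean, the rest of the argument is a short case chase from the pseudocode together with conditions (L1) and (R1) of Definition~\ref{def:Templates}.
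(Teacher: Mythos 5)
Your proof is correct and follows essentially the same route as the paper's, which compresses the whole argument into one sentence: line~\ref{alg:executeSymbolically:Push} can only fire at a function entry location (by (R1)), the instantiated successors land at locations that are neither entry nor exit (by (L1)/(R1)), and an entry location is otherwise reachable only through a classic call step that interposes a regular $(\sigma,l)$ record. Your version merely makes explicit the stack-hygiene bookkeeping and the case analysis that the paper leaves implicit.
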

\begin{proof}
    Follows immediately from requirement for locations of templates in Definition~\ref{def:Templates} and from the fact, that reaching line~\ref{alg:executeSymbolically:Push} requires a processed state must reference a function entry location.
\end{proof}

\begin{lem} \label{lem:GEquivToEquiv}
    Let $u \in T$, $u' \in T'$, $u'.\Xi \neq []$, {\tt top(}$u'.\Xi${\tt )} $ = (t, \kappa)$, $u'.l$ is an exit location and $u.s \overset{g}{\equiv} u'.s\prm{\bld{\nu}}$ for some valuation $\bld{\nu}$. Then there are the only direct successors $w \in T$ and $w' \in T'$ of $u$ and $u'$ respectively and they satisfy $w.s \equiv w'.s\prm{\bld{\nu}}$.
\end{lem}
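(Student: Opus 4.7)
The plan is to unfold both direct successors from the algorithm and then check the equivalence componentwise, relying on the Equivalent Compositions lemma and condition (R2) of Definition~\ref{def:Templates}.

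I would first observe that the hypotheses trigger the recursion-return branch of Algorithm~\ref{alg:executeSymbolically} for $u'$: the test at line~\ref{alg:executeSymbolically:BoxesBegin} is satisfied (and incidentally pins $u'.l = t.l'$), so line~\ref{alg:executeSymbolically:SuccessorInUnwinding} produces the single candidate $w' = (u'.\theta \circ t.\theta\prm{\kappa}, u'.\varphi, \texttt{pop(}u'.\Xi\texttt{)}, t.l')$, which is inserted into $T'$ because $u'.\varphi$ is satisfiable. For $u$, the global equivalence forces $u.l = u'.l = t.l'$, the exit of a non-starting function (by (R1)) with non-empty call stack, so $u$ is not final and classic \texttt{computeClassicSuccessors} deterministically produces a single successor $w$. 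Uniqueness of both successors follows.

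The heart of the argument is the componentwise check $w.s \equiv w'.s\prm{\bld{\nu}}$. Path conditions coincide by hypothesis. For call stacks, Lemma~\ref{lem:NonAdjacentPushes} guarantees that the record immediately below $(t,\kappa)$ in $u'.\Xi$ is regular, so $\texttt{pop(}u'.\Xi\texttt{)}\prm{\bld{\nu}}$ matches $\texttt{pop(}u.\Xi\texttt{)}$ under the wildcard convention of special-record substitution. For memory and location I would rewrite $(u'.\theta \circ t.\theta\prm{\kappa})\prm{\bld{\nu}}$ as $u'.\theta\prm{\bld{\nu}} \circ t.\theta\prm{\bld{\nu}(\kappa)}$ via Equivalent Compositions and then invoke (R2) of Definition~\ref{def:Templates}: the recursion template is constructed so that composition with $t.\theta\prm{\nu}$ at $t.l'$ captures the full classic return sequence up to $\overset{g}{\equiv}$. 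The given global equivalence supplies the global half, while the classic return step supplies the local half by restoring local variables from the popped frame and assigning the return value via $\textbf{ret}_f$.

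The main obstacle is precisely this upgrade from $\overset{g}{\equiv}$ to $\equiv$ across a single step. The key observation is that at an exit location the pre-step values of locals are about to be discarded by the return, so the post-step locals are determined entirely by the popped stack frame on the classic side and by $t.\theta\prm{\nu}$ on the compact side; articulating this cleanly through (R2) and the $(\bot,\bot)$ wildcard behaviour of the special-record substitution is the delicate bookkeeping that has to be nailed down.
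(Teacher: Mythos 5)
Your proof takes the wrong branch of Algorithm~\ref{alg:executeSymbolically} for $u'$, and that choice is where the argument breaks. The paper's (two-line) proof rests on the opposite fact: the successors of $u'$ are computed at line~\ref{alg:executeSymbolically:ClassicStep}, not at line~\ref{alg:executeSymbolically:SuccessorInUnwinding}. The lemma is invoked in case (3) of both the Soundness and Completeness proofs on the pair $(v,v')$ where $v'$ is the vertex \emph{just produced} by the unwinding step, so its call stack is $\texttt{pop(}\cdot\texttt{)}$ of the stack that carried the special record; by Lemma~\ref{lem:NonAdjacentPushes} the record now on top is an ordinary $(\sigma,l)$ record, the test at line~\ref{alg:executeSymbolically:BoxesBegin} fails, no template is anchored at an exit location (conditions (L1)/(R1) of Definition~\ref{def:Templates}), and hence the one remaining step on the compact side is an ordinary \texttt{computeClassicSuccessors} return --- exactly as on the classic side for $u$. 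Both returns are deterministic (hence the unique successors), both pop equivalent regular frames and transfer the same return value, and both discard the local variables, which are precisely the components on which $\overset{g}{\equiv}$ and $\equiv$ are allowed to differ; that is the entire content of the lemma. (The printed hypothesis $\texttt{top(}u'.\Xi\texttt{)} = (t,\kappa)$ does invite your reading, but it is inconsistent with the paper's own proof and with how the lemma is applied; it describes the stack one step earlier.)

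Under your reading the conclusion is in fact false: the successor computed at line~\ref{alg:executeSymbolically:SuccessorInUnwinding} still sits at the exit location $t.l'$ with stack $\texttt{pop(}u'.\Xi\texttt{)}$, whereas the classic return successor $w$ of $u$ sits at the call-site return location with a single regular frame popped from $u.\Xi$ (which, under $\prm{\bld{\nu}}$, contained $\bld{\nu}(\kappa)$ wildcard records in place of $(t,\kappa)$); the locations and stacks of $w$ and $w'$ then disagree and no componentwise check can yield $w.s \equiv w'.s\prm{\bld{\nu}}$. The appeal to (R2) cannot bridge this: (R2) only delivers $\overset{g}{\equiv}$, and only at a vertex $\bar{u}$ that is in general many steps past $u$ rather than a direct successor; moreover (R2) is exactly the tool the Soundness theorem uses to \emph{establish} the hypotheses of this lemma, so using it inside the lemma makes the overall argument circular. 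You did isolate the one genuine idea --- locals are discarded at a return, so $\overset{g}{\equiv}$ upgrades to $\equiv$, with Lemma~\ref{lem:NonAdjacentPushes} controlling the stack shape --- but the missing step is to place that single \emph{classic} return on both sides.
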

\begin{proof}
    Follows directly from Lemma~\ref{lem:NonAdjacentPushes} and from the fact that successors of $u'$ are computed at line~\ref{alg:executeSymbolically:ClassicStep} of Algorithm~\ref{alg:executeSymbolically}.
\end{proof}

\begin{thm}[Soundness]
    For each leaf vertex $e \in T$ there is a leaf vertex $e' \in T'$ and a valuation  $\bld{\nu}$ of all parameters in $e'.s$ such that $e.s \equiv e'.s\prm{\bld{\nu}}$.
\end{thm}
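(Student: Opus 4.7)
The plan is to prove the soundness theorem by induction on the length of the path from the root of $T$ to the leaf $e$, maintaining the invariant that for each vertex $u$ on the path from the root to $e$ in $T$, there exists a vertex $u'$ in $T'$ and a valuation $\bld{\nu}$ of all parameters in $u'.s$ such that $u.s \equiv u'.s\prm{\bld{\nu}}$. The base case is trivial since both trees have roots labelled with the same parameter-free initial state $s_0$ created at line~\ref{alg:executeSymbolically:InitialState}; we pick $\bld{\nu} = \emptyset$.

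For the inductive step, suppose $u \in T$ is matched with $u' \in T'$ via valuation $\bld{\nu}$, and let $v$ be the successor of $u$ in $T$ on the path toward $e$. I would split into cases according to which branch of Algorithm~\ref{alg:executeSymbolically} was taken when computing the successors of $u'$ in $T'$. In the classic-step case (line~\ref{alg:executeSymbolically:ClassicStep}), $v$ has a direct counterpart $v'$ among the direct successors of $u'$ because $v.s \equiv u.s \circ \text{(single action)}$ and $v'.s \equiv u'.s \circ \text{(same action)}$, and the lemma on Equivalent Compositions together with the inductive hypothesis transports the equivalence under $\bld{\nu}$. If instead a non-recursion template $t$ was instantiated at $u'$, then by (L2) there exists a descendant $w$ of $u$ in $T$ along the path toward $e$, an index $i$, and integer $\nu$ with $w.s \equiv u.s \circ (t.\theta_i, t.\varphi_i, t.\Xi_i, t.l_i)\prm{\nu}$; the corresponding direct successor $v'$ of $u'$ in $T'$ is labelled $u'.s \circ (t.\theta_i, t.\varphi_i, t.\Xi_i, t.l_i)\prm{\kappa}$ with a fresh parameter $\kappa$, so the lemma on Equivalent Compositions (associativity and the rule $s\prm{\bld{\nu}}\circ s'\prm{\bld{\nu}'} \equiv (s\circ s')\prm{\bld{\nu}\cup\bld{\nu}'}$) yields a matching of $w$ with $v'$ under the extended valuation $\bld{\nu}\cup\{(\kappa,\nu)\}$. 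The satisfiability of $w.\varphi$ (since $w$ lies on the path to a leaf in $T$) justifies that $v'$ was not discarded at line~\ref{alg:executeSymbolically:AddSATSuccessors}.

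The main obstacle, and the most delicate case, is the recursion template: here one instantiation in $T'$ spans two phases (the calling phase at line~\ref{alg:executeSymbolically:SuccessorInCalling} and the returning phase at line~\ref{alg:executeSymbolically:SuccessorInUnwinding}) that are separated in $T'$ by a stretch of classic execution through the recursive body and its exit. I would apply (R2) twice along the suffix of $\pi$ starting at $u$: first to obtain $w$ matching the calling-phase successor $v'$ of $u'$ exactly as above, with the extra $[(t,\kappa)]$ record in the call stack; then to follow classic execution inside the template in both trees in lockstep (using the classic-step subcase of the inductive hypothesis) until we encounter the first successor $\bar{w}$ of $w$ with $\bar{w}.l = t.l'$ and $\bar{w}.\Xi = w.\Xi$, corresponding to the vertex of $T'$ that triggers the condition at line~\ref{alg:executeSymbolically:BoxesBegin}. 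At this point the second clause of (R2) only guarantees equivalence on global variables: the vertex $\bar{u} \in T$ satisfies $\bar{u}.s \overset{g}{\equiv} (\bar{w}.\theta \circ t.\theta\prm{\nu}, \bar{w}.\varphi, u.\Xi, t.l')$, which is precisely the state written at line~\ref{alg:executeSymbolically:SuccessorInUnwinding} up to the valuation. Here I would invoke Lemma~\ref{lem:GEquivToEquiv} to promote the $\overset{g}{\equiv}$ to full $\equiv$ after the next step, restoring the invariant. Lemma~\ref{lem:NonAdjacentPushes} is what makes this lemma applicable, since the popped record $(t,\kappa)$ is guaranteed not to be immediately followed by another such record in $u'.\Xi$.

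Finally, because the invariant is preserved at every step and $T$ is a tree of finite paths leading to leaves, I would conclude by taking $u = e$. The corresponding $u' = e'$ must be a leaf of $T'$: if $e'$ had further successors, then running classic symbolic execution (or template instantiation) from $e'.s\prm{\bld{\nu}} \equiv e.s$ would produce successors for $e$ too, contradicting the choice of $e$ as a leaf. This gives the desired leaf $e' \in T'$ with $e.s \equiv e'.s\prm{\bld{\nu}}$.
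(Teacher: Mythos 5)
Your proposal is correct and follows essentially the same route as the paper's proof: induction along the path from the root of $T$ to $e$, with the invariant $u.s \equiv u'.s\prm{\bld{\nu}}$, a case split on which branch of Algorithm~1 computes the successors of $u'$, appeals to (L2) and both clauses of (R2), and the use of Lemma~2 (via Lemma~1) to lift $\overset{g}{\equiv}$ back to $\equiv$ after the unwinding step. The only cosmetic difference is that you make explicit the final argument that $e'$ is a leaf, which the paper leaves implicit.
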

\begin{proof}
    Let $\pi$ be the path in $T$ from the root to the leaf vertex $e$. We prove the theorem by the following induction:

    \emph{Basic case}: The root vertices $r$ and $r'$ of $T$ and $T'$ respectively are labelled by the same program state $s_0$ (see lines \ref{alg:executeSymbolically:InitialState} and \ref{alg:executeSymbolically:Root}). So, $r.s \equiv r'.s\prm{\bld{\nu}}$, for $\bld{\nu} = \emptyset$.

    \emph{Inductive step}: Let $u \in \pi$, $u \neq e$, $u'$ be a vertex of $T'$ and $\bld{\nu}$ be a valuation such that $u.s \equiv u'.s\prm{\bld{\nu}}$. We show, there is a successor $w$ of $u$ in $\pi$, a successor vertex $w'$ of $u'$ in $T'$ and a valuation $\bld{\nu}'$ such that $w.s \equiv w'.s\prm{\bld{\nu}'}$. And we further show there is no vertex $v'$ in the path between $u'$ and $w'$ in $T'$ such that successors of $v'.s$ are computed at line~\ref{alg:executeSymbolically:SuccessorInCalling}. There are four possible cases in Algorithm~\ref{alg:executeSymbolically} for $u'.s$:

    (1) We reach line~\ref{alg:executeSymbolically:Unrolling}: According to Definition~\ref{def:Templates}~(L2), there is a successor vertex $w$ of $u$ in $\pi$, an index $i$ and a non-negative integer $\nu$ for $\kappa$ such that
    \begin{align*}
        w.s &\equiv u.s \circ (t.\theta_i\prm{\kappa}, t.\varphi_i\prm{\kappa}, t.\Xi_i\prm{\kappa}, t.l_i)\prm{\{(\kappa,\nu)\}}
        \\
        &\equiv u'.s\prm{\bld{\nu}} \circ (t.\theta_i\prm{\kappa}, t.\varphi_i\prm{\kappa}, t.\Xi_i\prm{\kappa}, t.l_i)\prm{\{(\kappa,\nu)\}}
        \\
        &\equiv (u'.s \circ (t.\theta_i\prm{\kappa}, t.\varphi_i\prm{\kappa}, t.\Xi_i\prm{\kappa}, t.l_i))\prm{\bld{\nu} \cup \{(\kappa,\nu)\} }
        \\
        &\equiv s'\prm{\bld{\nu}'},
    \end{align*}
    where $s'$ is the $i$-th direct successor of $u'.s$ computed at line~\ref{alg:executeSymbolically:SuccessorInUnrolling}. And since $w \in T$, we have $w'.\varphi$ is satisfiable. Therefore, there is be a direct successor $w'$ of $u'$ in $T'$ with $w.s = s'$.

    (2) We reach line~\ref{alg:executeSymbolically:Calling}: According to Definition~\ref{def:Templates}~(R2), there is a successor vertex $w$ of $u$ in $\pi$, an index $i$ and a non-negative integer $\nu$ for $\kappa$ such that
    \begin{align*}
        w.s &\equiv u.s \circ (t.\theta_i\prm{\kappa}, t.\varphi_i\prm{\kappa}, [(t,\kappa)] \circ t.\Xi_i\prm{\kappa}, t.l_i)\prm{\{(\kappa,\nu)\}}
        \\
        &\equiv u'.s\prm{\bld{\nu}} \circ (t.\theta_i\prm{\kappa}, t.\varphi_i\prm{\kappa}, [(t,\kappa)] \circ t.\Xi_i\prm{\kappa}, t.l_i)\prm{\{(\kappa,\nu)\}}
        \\
        &\equiv (u'.s \circ (t.\theta_i\prm{\kappa}, t.\varphi_i\prm{\kappa}, [(t,\kappa)] \circ t.\Xi_i\prm{\kappa}, t.l_i))\prm{\bld{\nu} \cup \{(\kappa,\nu)\} },
        \\
        &\equiv s'\prm{\bld{\nu}'},
    \end{align*}
    where $s'$ is the $i$-th direct successor of $u'.s$ computed at line~\ref{alg:executeSymbolically:SuccessorInCalling}. And since $w \in T$, we have $w'.\varphi$ is satisfiable. Therefore, there is a direct successor $w'$ of $u'$ in $T'$ with $w.s = s'$.

    (3) We reach line~\ref{alg:executeSymbolically:Unwinding}: 
    Let $\pi'$ be a path in $T'$ from the root to the vertex $u'$. According to connections between vertices $u'$ constructed for vertices $u$ along $\pi$, there is a predecessor $x'$ of $u'$ in $\pi'$, which pushed (at line~\ref{alg:executeSymbolically:Push}) the record being at the top of $u'.\Xi$. Obviously, successors of $x'.s$ are computed at line~\ref{alg:executeSymbolically:SuccessorInCalling}. Therefore, there is $x \in \pi$ such that $x.s \equiv x'.s\prm{\bld{\nu}}$. According to case (2) there is a successor $y$ of $x$ in $\pi$ and a direct successor $y'$ of $x'$ in $\pi'$ such that $y.s \equiv y'.s\prm{\bld{\nu}}$. Note that $y'.s$ uses the parameter $\kappa$ retrieved from stack $u'.\Xi$ at line~\ref{alg:executeSymbolically:UnwindingGetKappa}. Therefore, valuation $\bld{\nu}$ defines an integer $\nu = \bld{\nu}(\kappa)$. Also note that $u$ is the first successor of $y$ in $\pi$ with $u.l$ being an exit location and $u.\Xi = y.\Xi$. Otherwise we would apply this case (3) for some other vertex lying between $y'$ and $u'$ in $\pi'$. Therefore, from Definition~\ref{def:Templates}~(R2) there is a non-leaf vertex $v$ in a suffix of $\pi$ starting with $u$ such that
    \begin{align*}
        v.s &\overset{g}{\equiv} (u.\theta \circ t.\theta\prm{\kappa}, u.\varphi, x.\Xi, t.l')\prm{\{(\kappa,\nu)\}}
        \\
        &\overset{g}{\equiv} (u'.\theta\prm{\bld{\nu}} \circ t.\theta\prm{\kappa}, u'.\varphi\prm{\bld{\nu}}, \texttt{pop(}u'.\Xi\texttt{)}\prm{\bld{\nu}}, t.l')\prm{\{(\kappa,\nu)\}}
        \\
        &\overset{g}{\equiv} (u'.\theta \circ t.\theta\prm{\kappa}, u'.\varphi, \texttt{pop(}u'.\Xi\texttt{)}, t.l')\prm{\bld{\nu}}
        \\
        &\overset{g}{\equiv} s'\prm{\bld{\nu}},
    \end{align*}
    where $s'$ is the only successor state of $u'.s$ computed at line~\ref{alg:executeSymbolically:SuccessorInUnwinding}. Since $v \in T$, then $s'.\varphi$ is satisfiable and there is a direct successor $v'$ of $u'$ in $T'$ with $v'.s = s'$. And finally Lemma~\ref{lem:GEquivToEquiv} ensures there are the only direct successors $w$ and $w'$ of $v$ and $v'$ respectively, such that $w.s \equiv w'.s\prm{\bld{\nu}}$.

    (4) Otherwise, we reach line~\ref{alg:executeSymbolically:ClassicStep}: Since $u.s \equiv u'.s\prm{\bld{\nu}}$ and we apply classic symbolic execution step for $u'.s$, there must be a direct successor $w$ of $u$ and a direct successor $w'$ of $u'$ such that $w.s \equiv w'.s\prm{\bld{\nu}}$.
\end{proof}

\begin{thm}[Completeness]
    For each leaf vertex $e' \in T'$ there is a leaf vertex $e \in T$ and a valuation $\bld{\nu}$ of all parameters in $e'.s$ such that $e.s = e'.s\prm{\bld{\nu}}$.
\end{thm}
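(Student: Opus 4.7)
The plan is to establish, by induction on the depth of $u'$ in $T'$, the invariant that for each vertex $u' \in T'$ there is a vertex $u \in T$ and a valuation $\bld{\nu}$ of all parameters occurring in $u'.s$ such that $u.s \equiv u'.s\prm{\bld{\nu}}$. Since both roots carry the parameter-free initial state $s_0$ (lines~\ref{alg:executeSymbolically:InitialState} and~\ref{alg:executeSymbolically:Root}), the base case holds with $\bld{\nu} = \emptyset$. Once the invariant is known for the leaf $e'$, the desired conclusion follows: because $e'.l$ is an exit of the starting function or an error location, the matched vertex $u \in T$ is also detected as final at line~\ref{alg:executeSymbolically:FinalState} and is a leaf of $T$, so we may take $e = u$ (reading the "$=$" in the statement as the equivalence $\equiv$ used throughout the paper).

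The inductive step is a case split on which branch of Algorithm~\ref{alg:executeSymbolically} produced the successor $w'$ of $u'$ in $T'$, mirroring the Soundness proof but run in the opposite direction. In the classic-step case (line~\ref{alg:executeSymbolically:ClassicStep}), $u.s \equiv u'.s\prm{\bld{\nu}}$ transports directly to a matching direct successor $w$ of $u$ in $T$ with the same $\bld{\nu}$. For the unrolling of a general template (line~\ref{alg:executeSymbolically:SuccessorInUnrolling}), satisfiability of $w'.\varphi$ yields an integer $\nu$ for the fresh parameter $\kappa$ making $u.\varphi \wedge u.\theta\ese{t.\varphi_i\prm{\nu}}$ satisfiable; Definition~\ref{def:Templates}~(L3) then produces a successor $w$ of $u$ in $T$ with $w.s \equiv u.s \circ (t.\theta_i, t.\varphi_i, t.\Xi_i, t.l_i)\prm{\nu}$, and the Equivalent Compositions lemma lets us rewrite this as $w'.s\prm{\bld{\nu} \cup \{(\kappa,\nu)\}}$. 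The recursion-calling case (line~\ref{alg:executeSymbolically:SuccessorInCalling}) is handled identically using~(R3).

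The hard case is unwinding (line~\ref{alg:executeSymbolically:SuccessorInUnwinding}), where a single compact step compresses the entire return phase of a recursion. I would trace the path in $T'$ back to the unique ancestor $x'$ of $u'$ that pushed the stack record $(t,\kappa)$ currently at the top of $u'.\Xi$; by the induction hypothesis $x'$ has a counterpart $x \in T$, and by the already-handled calling case its relevant direct successor $y'$ in $T'$ corresponds to a vertex $y \in T$ with $y.s \equiv y'.s\prm{\bld{\nu}}$. The parameter $\kappa$ stored in the record is exactly the one whose value $\nu = \bld{\nu}(\kappa)$ was fixed when the record was pushed, so Definition~\ref{def:Templates}~(R2) applied from $y$ in $T$ produces a non-leaf $v \in T$ satisfying $v.s \overset{g}{\equiv} s'\prm{\bld{\nu}}$, where $s'$ is the successor state computed at line~\ref{alg:executeSymbolically:SuccessorInUnwinding}; a final invocation of Lemma~\ref{lem:GEquivToEquiv} upgrades the $\overset{g}{\equiv}$-match to the required $\equiv$-match on the respective direct successors. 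I expect this case to be the main obstacle: one has to verify that the integer fixed at the call site is exactly the one that closes the loop at the return site, which hinges on the uniqueness of the fresh parameter (line~\ref{alg:executeSymbolically:FreshKappa}) and on the call-return coupling enforced by~(R2).
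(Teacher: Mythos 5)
There is a genuine gap in your inductive invariant: it is too weak to survive the inductive step. You carry along a \emph{single} vertex $u \in T$ and a \emph{single} valuation $\bld{\nu}$ with $u.s \equiv u'.s\prm{\bld{\nu}}$, and then argue that satisfiability of $w'.\varphi$ lets you extend $\bld{\nu}$ by a value for the fresh parameter. But a satisfying assignment of $w'.\varphi$ need not agree with $\bld{\nu}$ on the parameters already present in $u'.s$: once the path $\pi'$ accumulates further constraints on an old parameter (e.g.\ a later branch condition mentioning $\kappa_1$ introduced by an earlier template instantiation), the instantiation $w'.\varphi\prm{\bld{\nu}\cup\{(\kappa,\nu)\}}$ may be unsatisfiable for every $\nu$ even though $w'.\varphi$ itself is satisfiable under some \emph{other} assignment to the old parameters. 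In that situation the particular vertex $u$ you committed to (corresponding, say, to zero iterations of an earlier loop) simply has no successor in $T$ matching $w'$, so neither (L3)/(R3) nor the classic-step transport applies, and the induction stalls. Note that this already bites in your ``easy'' case~(4): you cannot ``transport directly with the same $\bld{\nu}$'' when the branch taken along $\pi'$ is infeasible under $\bld{\nu}$.

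The paper repairs exactly this by strengthening the induction hypothesis: it maintains, for each $u'$ on $\pi'$, a \emph{set} $U$ of vertices of $T$ such that for \emph{every} valuation $\bld{\nu}$ making $u'.\varphi\prm{\bld{\nu}}$ satisfiable there is some $u \in U$ with $u.s \equiv u'.s\prm{\bld{\nu}}$ (together with the bookkeeping that each member of the successor set $W$ is a successor of a member of $U$, which is what your back-tracing in the unwinding case silently relies on). At each step one first fixes a valuation satisfying $w'.\varphi$, restricts it to obtain a satisfying valuation of $u'.\varphi$, and only then selects the matching $u \in U$ for \emph{that} valuation before invoking (L3)/(R3) or Definition~\ref{def:Templates}~(R2). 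Your treatment of the individual cases — in particular the unwinding case with the ancestor $x'$, the reuse of the stored parameter $\kappa$, and the final appeal to Lemma~\ref{lem:GEquivToEquiv} — otherwise mirrors the paper's argument correctly, so the fix is localized: replace ``there exist $u$ and $\bld{\nu}$'' by ``for all satisfying $\bld{\nu}$ there exists $u$'' throughout the induction, and recover the theorem's existential statement only at the leaf.
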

\begin{proof}
    Let $\pi'$ be the path in $T'$ from the root to the leaf vertex $e'$. We prove the theorem by the following induction:

    \emph{Basic case}: The root vertices $r$ and $r'$ of $T$ and $T'$ respectively are labelled by the same program state $s_0$ (see lines \ref{alg:executeSymbolically:InitialState} and \ref{alg:executeSymbolically:Root}). Let us construct a non-empty set $U$ of vertices of $T$ such that for each valuation $\bld{\nu}$ of all parameters in $r'.s$ such that $r'.\varphi\prm{\bld{\nu}}$ is satisfiable, there is $u \in U$ such that $u.s \equiv r'.s\prm{\bld{\nu}}$. Obviously $U = \{ r \}$, because $r'.\varphi$ contains no parameter (so $r.s \equiv r'.s\prm{\bld{\nu}}$, for each $\bld{\nu})$.

    \emph{Inductive step}:  Let $u' \in \pi'$, $u' \neq e'$ and $U$ be a non-empty set of vertices of $T$ such that for each valuation $\bld{\nu}$ of all parameters in $u'.s$ such that $u'.\varphi\prm{\bld{\nu}}$ is satisfiable, there is $u \in U$ such that $u.s \equiv u'.s\prm{\bld{\nu}}$. We show, there is a successor $w'$ of $u'$ in $\pi'$ and a non-empty set $W$ of vertices of $T$ such that for each valuation $\bld{\nu}'$ of all parameters in $w'.s$ such that $w'.\varphi\prm{\bld{\nu}'}$ is satisfiable, there is $w \in W$ such that $w.s \equiv w'.s\prm{\bld{\nu}'}$. And we further show that each $w \in W$ is a successor of some $u \in U$ and there is no vertex $v'$ between $u'$ and $w'$ in $\pi'$ such that successors of $v'.s$ are computed at line~\ref{alg:executeSymbolically:SuccessorInCalling}. There are four possible cases in Algorithm~\ref{alg:executeSymbolically} for $u'.s$:

    (1) We reach line~\ref{alg:executeSymbolically:Unrolling}: Let $w'$ be a direct successor of $u'$ in $\pi'$. Obviously, $w'.s$ is one of the states $s'$ computed at line~\ref{alg:executeSymbolically:SuccessorInUnrolling}. Let $i$ be the index, for which $w'.s = s'$. The formula $w'.\varphi$ is satisfiable, since $w'$ is in $T'$ (see condition at line~\ref{alg:executeSymbolically:AddSATSuccessors}). Let $\bld{\nu}$ be a valuation for which $w'.\varphi$ is satisfiable. And let $\bld{\nu}' = \bld{\nu} \smallsetminus \{ (\kappa,\nu) \}$, where $\nu$ is an integer assigned in $\bld{\nu}$ to the fresh parameter $\kappa$ introduced at line~\ref{alg:executeSymbolically:FreshKappa}. From line~\ref{alg:executeSymbolically:SuccessorInUnrolling} we see that $u'.\varphi\prm{\bld{\nu}'}$ is satisfiable. Therefore, there is a vertex $u \in U$ such that $u.s \equiv u'.s\prm{\bld{\nu}'}$. According to Definition~\ref{def:Templates}~(L3) there is a successor $w$ of $u$ in $T$ such that
    \begin{align*}
        w.s &\equiv u.s \circ (t.\theta_i\prm{\kappa}, t.\varphi_i\prm{\kappa}, t.\Xi_i\prm{\kappa}, t.l_i)\prm{\{(\kappa,\nu)\}}
        \\
        &\equiv u'.s\prm{\bld{\nu}'} \circ (t.\theta_i\prm{\kappa}, t.\varphi_i\prm{\kappa}, t.\Xi_i\prm{\kappa}, t.l_i)\prm{\{(\kappa,\nu)\}}
        \\
        &\equiv (u'.s \circ (t.\theta_i\prm{\kappa}, t.\varphi_i\prm{\kappa}, t.\Xi_i\prm{\kappa}, t.l_i))\prm{\bld{\nu}}
        \\
        &\equiv w'.s\prm{\bld{\nu}}.
    \end{align*}
    Therefore, $w \in W$.

    (2) We reach line~\ref{alg:executeSymbolically:Calling}: Let $w'$ be a direct successor of $u'$ in $\pi'$. Obviously, $w'.s$ is one of the states $s'$ computed at line~\ref{alg:executeSymbolically:SuccessorInCalling}. Let $i$ be the index, for which $w'.s = s'$. The formula $w'.\varphi$ is satisfiable, since $w'$ is in $T'$ (see condition at line~\ref{alg:executeSymbolically:AddSATSuccessors}). Let $\bld{\nu}$ be a valuation for which $w'.\varphi$ is satisfiable. And let $\bld{\nu}' = \bld{\nu} \smallsetminus \{ (\kappa,\nu) \}$, where $\nu$ is an integer assigned in $\bld{\nu}$ to the fresh parameter $\kappa$ introduced at line~\ref{alg:executeSymbolically:FreshKappa}. From line~\ref{alg:executeSymbolically:SuccessorInCalling} we see that $u'.\varphi\prm{\bld{\nu}'}$ is satisfiable. Therefore, there is a vertex $u \in U$ such that $u.s \equiv u'.s\prm{\bld{\nu}'}$. According to Definition~\ref{def:Templates}~(R3) there is a successor $w$ of $u$ in $T$ such that
    \begin{align*}
        w.s &\equiv u.s \circ (t.\theta_i\prm{\kappa}, t.\varphi_i\prm{\kappa}, [(t,\kappa)] \circ t.\Xi_i\prm{\kappa}, t.l_i)\prm{\{(\kappa,\nu)\}}
        \\
        &\equiv u'.s\prm{\bld{\nu}'} \circ (t.\theta_i\prm{\kappa}, t.\varphi_i\prm{\kappa}, [(t,\kappa)] \circ t.\Xi_i\prm{\kappa}, t.l_i)\prm{\{(\kappa,\nu)\}}
        \\
        &\equiv (u'.s \circ (t.\theta_i\prm{\kappa}, t.\varphi_i\prm{\kappa}, [(t,\kappa)] \circ t.\Xi_i\prm{\kappa}, t.l_i))\prm{\bld{\nu}}
        \\
        &\equiv w'.s\prm{\bld{\nu}}.
    \end{align*}
    Therefore, $w \in W$.

    (3) We reach line~\ref{alg:executeSymbolically:Unwinding}: Let $x'$ be a predecessor of $u'$ in $\pi'$, which pushed (at line~\ref{alg:executeSymbolically:Push}) the record being at the top of $u'.\Xi$. Obviously, successors of $x'.s$ are computed at line~\ref{alg:executeSymbolically:SuccessorInCalling}. Further, let $y'$ and $v'$ be direct successors of $x'$ and $u'$ in $\pi'$ respectively. The formula $v'.\varphi$ is satisfiable, since $v'$ is in $T'$ (see condition at line~\ref{alg:executeSymbolically:AddSATSuccessors}). Note that $v'$ is the only successor of $u'$ in $T'$. Let $\bld{\nu}$ be a valuation for which $v'.\varphi$ is satisfiable. Note that $\bld{\nu}$ defines an integer $\nu = \bld{\nu}(\kappa)$ for the parameter $\kappa$ retrieved from stack $u'.\Xi$ at line~\ref{alg:executeSymbolically:UnwindingGetKappa}, since $y'.s$ must have already used it. From line~\ref{alg:executeSymbolically:SuccessorInUnwinding} we see that $u'.\varphi\prm{\bld{\nu}}$ is satisfiable. Therefore, there is a vertex $u \in U$ such that $u.s \equiv u'.s\prm{\bld{\nu}}$. Let $\pi$ be a path in $T$ from the root to a leaf vertex and going through $u$. According to connections between vertices of sets $U$ constructed for vertices $u'$ along $\pi'$, there is a predecessor $x$ of $u$ in $\pi$, such that $x.s \equiv x'.s\prm{\bld{\nu}}$. Since $y'$ is the direct successor of $x$ in $\pi$ (i.e.~there was computed a set $W$ for $y'$), there must also exist a vertex $y \in \pi$ lying between $x$ and $u$ and $y.s \equiv y'.s\prm{\bld{\nu}}$. Note that $u$ is the first successor of $y$ in $\pi$ with $u.l$ being an exit location and $u.\Xi = y.\Xi$. Otherwise we would apply this case (3) for some other vertex lying between $y'$ and $u'$ in $\pi'$. Therefore, from Definition~\ref{def:Templates}~(R2) there is a non-leaf vertex $v$ in a suffix of $\pi$ starting with $u$ such that
    \begin{align*}
        v.s &\overset{g}{\equiv} (u.\theta \circ t.\theta\prm{\kappa}, u.\varphi, x.\Xi, t.l')\prm{\{(\kappa,\nu)\}}
        \\
        &\overset{g}{\equiv} (u'.\theta\prm{\bld{\nu}} \circ t.\theta\prm{\kappa}, u'.\varphi\prm{\bld{\nu}}, \texttt{pop(}u'.\Xi\texttt{)}\prm{\bld{\nu}}, t.l')\prm{\{(\kappa,\nu)\}}
        \\
        &\overset{g}{\equiv} (u'.\theta \circ t.\theta\prm{\kappa}, u'.\varphi, \texttt{pop(}u'.\Xi\texttt{)}, t.l')\prm{\bld{\nu}}
        \\
        &\overset{g}{\equiv} v'.s\prm{\bld{\nu}}.
    \end{align*}
    And finally Lemma~\ref{lem:GEquivToEquiv} ensures there are the only direct successors $w$ and $w'$ of $v$ and $v'$ respectively, such that $w.s \equiv w'.s\prm{\bld{\nu}}$. Therefore, $w \in W$.

    (4) Otherwise, we reach line~\ref{alg:executeSymbolically:ClassicStep}: Let $u$ be any vertex in $U$. Since $u.s \equiv u'.s\prm{\bld{\nu}}$ for some valuation $\bld{\nu}$ for which $u'.\varphi\prm{\bld{\nu}}$ is satisfiable and since all direct successors of both $u$ and $u'$ are computed by classic symbolic execution step, there must be a direct successor $w$ of $u$ in $T$ and a direct successor $w'$ of $u'$ in $T'$ such that $w.s \equiv w'.s\prm{\bld{\nu}}$. Note that both $u'.s$ and $w'.s$ have exactly the same parameters. Therefore, $w \in W$.
\end{proof}

\section{Computation of Templates} \label{sec:Detection}

In this section we show one possible approach to computation of templates with one parameter. We provide detailed description of an algorithm computing a template for a program part with specified cyclic path, entry location, and several exit ones. Then we extend concept of the algorithm to computation of recursion templates for program parts.


\subsection{Template for Program Part with Cyclic Path} \label{sec:Detection:Cycles}

Let $P$ be a program and let us suppose we have a program part of $P$ with a cyclic path, an entry location $e$ and some exit location $x$ (but there can be other exits from the part). We show how to compute a symbolic memory $\theta_x\prm{\kappa}$, a path condition $\varphi_x\prm{\kappa}$ and a call stack $\Xi_x\prm{\kappa}$ at the exit location $x$. The computation of remaining parts of resulting template are then straightforward.

The algorithm proceeds in two steps. First, we compute a program state $(\theta, \varphi, [], e)$ resulting from classic symbolic execution of the cyclic path of the part exactly once, and a program state $(\hat{\theta}, \hat{\varphi}, \hat{\Xi}, x)$ resulting from classic symbolic execution of a path from $e$ to $x$. The second step is to express $\theta_x\prm{\kappa}$, $\varphi_x\prm{\kappa}$ and $\Xi_x\prm{\kappa}$ in terms of the program states computed in the first step.

The computation of program states $(\theta, \varphi, [], e)$ and $(\hat{\theta}, \hat{\varphi}, \hat{\Xi}, x)$ requites to run classic symbolic execution on the analysed program part. But Algorithm~\ref{alg:executeSymbolically} can only execute programs satisfying Definition~\ref{def:Program}. Therefore, we create a new program, say $P'$, representing the analysed part.

We start with a program $P'$ consisting of all variables of $P$ and of all those functions of $P$ having at least one location of the cycle. Note that the cyclic path of the part may traverse several functions through call sites. We now remove all the locations and edges in $P'$, which do not belong to the cycle nor to the path from $e$ to $x$. We assume that $x$ does not belong to the cyclic path, since otherwise we can always create its copy outside the cycle. Next we mark the function in $P'$ containing the entry location $e$ as the starting function of $P'$ and we set $e$ to be the entry location of the function. Then we create a new location $e'$ representing the exit location from the starting function. Now we break the cyclic path in the entry $e$ such that we redirect the only in-edge of $e$ (belonging to the cycle) to $e'$. And finally we transform $x$ to error location by adding loop edge with \texttt{skip} action.

$P'$ is now a program according to Definition~\ref{def:Program}. So, we can run unmarked version of Algorithm~\ref{alg:executeSymbolically}. Note that the algorithm must always terminate for $P'$. Let $E$ be a set of resulting program states. Then $|E| \leq 2$. If there is no $s \in E$ such that $s.l = e$, then we do not create the template for the part, since there is no real path around the cycle. If there is no state $s \in E$ such that $s.l = x$, then we discard the exit $x$ from the consideration for the template, since it is impossible to leave the loop through $x$. Otherwise, $E$ contains exactly two program states, which are the states we are looking for.

Now we show how to express $\theta_x\prm{\kappa}$, $\varphi_x\prm{\kappa}$ and $\Xi_x\prm{\kappa}$ in terms of the program states computed above. Let $T$ be a symbolic execution tree of $P$, computed by $*$-version of Algorithm~\ref{alg:executeSymbolically}. Further, let $u$ be a vertex of $T$ such that $u.l = e$ and $\pi = u \ldots u_1 \ldots u_2 \ldots u_{\nu} \ldots w$ be a path in $T$ starting at $u$, iterating the cycle of the part exactly $\nu \geq 0$ times, i.e.~all the vertices $u_i$ have $u_i.l = e$, and then $\pi$ leaves the cycle into the vertex $w$ with $w.l = x$. We use memory composition to express memories of vertices along $\pi$ as follows.
\begin{align*}
    u_1.\theta &= u.\theta \circ \theta
    \\
    u_2.\theta &= u_1.\theta \circ \theta = u.\theta \circ (\theta \circ \theta)
    \\
    \cdots
    \\
    u_{\nu}.\theta &= u_{\nu-1}.\theta \circ \theta = u.\theta \circ (\underset{\nu}{\underbrace{\theta \circ \cdots \circ \theta}}).
\end{align*}
If we denote the composition of $i$ symbolic memories $\theta$ by $\theta^i$, where $\theta^0 = \Theta$ and $\theta^1 = \theta$, then we have $u_{i}.\theta = u.\theta \circ \theta^i$ and we get
\begin{align*}
    w.\theta = u.\theta \circ (\theta^{\nu} \circ \hat{\theta}).
\end{align*}
We proceed similarly to express path conditions of vertices along $\pi$.
\begin{align*}
    u_1.\varphi &\equiv u.\varphi \wedge u.\theta\ese{\varphi} \equiv u.\varphi \wedge (u.\theta \circ \theta^0)\ese{\varphi} \equiv u.\varphi \wedge u.\theta\ese{\theta^0\ese{\varphi}}
    \\
    u_2.\varphi &\equiv u_1.\varphi \wedge u_1.\theta\ese{\varphi} \equiv u.\varphi \wedge u.\theta\ese{\theta^0\ese{\varphi}} \wedge (u.\theta \circ \theta^1)\ese{\varphi} \equiv u.\varphi \wedge u.\theta\ese{\theta^0\ese{\varphi} \wedge \theta^1\ese{\varphi}}
    \\
    \cdots
    \\
    u_{\nu}.\varphi &\equiv u_{\nu-1}.\varphi \wedge u_{\nu-1}.\theta\ese{\varphi} \equiv u.\varphi \wedge u.\theta\ese{\underset{\nu}{\underbrace{\theta^0\ese{\varphi} \wedge \ldots \wedge \theta^{\nu-1}\ese{\varphi}}}}
\end{align*}
Using the following equivalence
    \[\theta^0\ese{\varphi} \wedge \ldots \wedge \theta^{\nu-1}\ese{\varphi} \equiv 0 \leq \nu \wedge \forall \tau~(0 \leq \tau < \nu \rightarrow \theta^{\tau}\ese{\varphi}),\]
we can write
\begin{align*}
    w.\varphi &\equiv u_{\nu}.\varphi \wedge u_{\nu}.\theta\ese{\hat{\varphi}} \equiv u.\varphi \wedge u.\theta\ese{\theta^0\ese{\varphi} \wedge \ldots \wedge \theta^{\nu-1}\ese{\varphi} \wedge \theta^{\nu}\ese{\hat{\varphi}}}
    \\
    &\equiv u.\varphi \wedge u.\theta\ese{0 \leq \nu \wedge \forall \tau~(0 \leq \tau < \nu \rightarrow \theta^{\tau}\ese{\varphi}) \wedge \theta^{\nu}\ese{\hat{\varphi}}}.
\end{align*}
SMT solvers do not support memory composition operation appearing in the formula $w.\varphi$. Therefore, we need an equivalent \emph{declarative} description of the operation. Such a description is a parametrised symbolic memory $\theta\prm{\kappa}$, where we require $\theta\prm{\kappa} \equiv \theta^{\kappa}$, for any $\kappa \geq 0$. For a given symbolic memory $\theta$ we compute content of $\theta\prm{\kappa}$ per variable by applying the following two rules
\begin{gather*}
    \frac{\theta(\var{a}) = \Theta(\var{a}) + c,~~~\var{a}~\textrm{is of a numeric type},~~~c \textrm{~is a numeric constant of \var{a}'s type}}{\theta\prm{\kappa}(\var{a}) = \Theta(\var{a}) + c \cdot \texttt{typeOf<a>(}\kappa\texttt{)}},
    \\ \\
    \frac{\theta(\var{A}) = \Theta(\var{A}),~~~\var{A}~\textrm{is of a an array type}}{\theta\prm{\kappa}(\var{A}) = \Theta(\var{A})},
\end{gather*}
where expression $\texttt{typeOf<a>(}\kappa\texttt{)}$ represent casting operation of $\kappa$ to a type of variable \var{a}. If there is a variable, which does not match any of the rules, then we fail to compute $\theta\prm{\kappa}$. And we thus fail to compute the template. Obviously, one can provide more rules for more complex symbolic memories. The presented rules are only supposed to illustrate the process.

Having $\theta\prm{\kappa}$ we define
\begin{align*}
    \theta_x\prm{\kappa} &= \theta\prm{\kappa} \circ \hat{\theta}
    \\
    \varphi_x\prm{\kappa} &= 0 \leq \kappa \wedge \forall \tau~(0 \leq \tau < \kappa \rightarrow \theta\prm{\tau}\ese{\varphi}) \wedge \theta\prm{\kappa}\ese{\hat{\varphi}}
    \\
    \Xi_x\prm{\kappa} &= \theta\prm{\kappa} \circ \hat{\Xi},
\end{align*}
and we get $w.\theta \equiv u.\theta \circ \theta_x\prm{\nu}$, $w.\varphi \equiv u.\varphi \wedge u.\theta\ese{\varphi_x\prm{\nu}}$ and $w.\Xi \equiv u.\Xi \circ (u.\theta \circ \Xi_x\prm{\nu})$. Using these equivalences we write $w.s \equiv u.s \circ (\theta_x, \varphi_x, \Xi_x, x)\prm{\nu}$, which is exactly the equivalence used in Definition~\ref{def:Templates}~(L2) and (L3).

\subsection{Template for Program Parts Representing Recursion}

Let $P$ be a program, $f$ be a recursive function of $P$, $e$ and $x$ be entry and exit locations of $f$ respectively and let $h = (u,v)$ be an edge of $f$ with an action representing recursive call of $f$. We transform computation of recursion template for recursive calling of $f$ into analysis of two program parts $P_1$ and $P_2$ with cyclic paths. The cycle of $P_1$ starts at location $e$ and leads to $u$. We then enclose the cycle by an artificial edge whose action simulate an effect of any call of $f$. Let $e$ be entry location of $P_1$ and let $x_1, \ldots, x_n$ be its exit locations. We compute a template $t_1 = (e, n, \{ (\theta_1, \varphi_1, \Xi_1, x_1)\prm{\kappa}, \ldots, (\theta_n, \varphi_n, \Xi_n, x_n)\prm{\kappa} \})$ for $P_1$ according to algorithm from Section~\ref{sec:Detection:Cycles}. Having $t_1$ we can express the resulting recursive template $t$ as follows.
\[ t = (e, n, \{ (\theta_1, \varphi_1, \Xi_1, x_1)\prm{\kappa}, \ldots, (\theta_n, \varphi_n, \Xi_n, x_n)\prm{\kappa} \}, \theta\prm{\kappa}, x), \]
where $\theta\prm{\kappa}$ is the only unknown component in $t$. We compute the symbolic memory $\theta$ from analysis of the second program part $P_2$. The cycle of $P_2$ starts at $x$. There we add an artificial edge, whose action simulate an effect of return from any call of $f$. The artificial edge gets us to location $v$. Then we enclose the cycle by following a path from $v$ to $x$. We set $x$ to be the entry location of $P_2$ and we further set $x$ to also be the only exit location from $P_2$. As you can see, here we have introduced an assumption that there is no branching along the path from $v$ to $x$, i.e.~we cannot escape from the path. We discuss the case, when there is some branching (escape edges) along the path later. Since we have defined the program part $P_2$, we compute its template $t_2 = (x, 1, \{ (\theta\prm{\kappa}, \true, [], x) \})$ according to algorithm from Section~\ref{sec:Detection:Cycles}. Then we take the symbolic memory $\theta\prm{\kappa}$ and we complete the recursion template $t$.

\begin{figure}[!htb]
    \begin{tabular}{cc}
         {\centering
\tikzstyle{start} = [regular polygon,regular polygon sides=3,
    regular polygon rotate=180,thick,draw,inner sep=2.0pt]
\tikzstyle{target} = [regular polygon,regular polygon sides=3,
    regular polygon rotate=0,thick,draw,inner sep=1pt]
\tikzstyle{loc} = [circle,thick,draw]
\tikzstyle{pre} = [<-,shorten <=1pt,>=stealth',semithick]
\tikzstyle{post} = [->,shorten <=1pt,>=stealth',semithick]
\footnotesize
\begin{tikzpicture}[node distance=1.1cm]
    \node [start] (a) {$a$};
    \node [loc] (b) [left of=a,yshift=-10mm] {$b$}
        edge [pre,bend left] node [label=left:\texttt{i<n}] {} (a);
    \node [loc] (c) [below of=b,yshift=-5mm] {$c$}
        edge [pre] node
            [label=right:
                \begin{tabular}{l}
                    \texttt{t:=countIf(} \\
                    \texttt{~~A,i+1,n,x)}
                \end{tabular}
                ] {} (b);
    \node [loc] (d) [left of=c,yshift=-10mm] {$d$}
        edge [pre,bend left]
                node [label=below:\texttt{~~~~~~A[i]!=x}] {} (c);
    \node [loc] (e) [right of=c,yshift=-10mm] {$e$}
        edge [pre,bend right]
                node [label=above:\texttt{~~~~A[i]=x}] {} (c);
    \node [loc] (g) [right of=a,yshift=-10mm] {$g$}
        edge [pre,bend right] node [label=right:\texttt{i>=n}] {} (a);
    \node [target] (h) [below of=e] {$h$}
        edge [pre,bend left=25] node [label=left:\texttt{ret t~~}] {} (d)
        edge [pre] node [label=left:\texttt{ret t+1}] {} (e)
        edge [pre,bend right=35] node [label=left:\texttt{ret 0}] {} (g);
\end{tikzpicture}} & {\centering
\tikzstyle{start} = [regular polygon,regular polygon sides=3,
    regular polygon rotate=180,thick,draw,inner sep=2.0pt]
\tikzstyle{target} = [regular polygon,regular polygon sides=3,
    regular polygon rotate=0,thick,draw,inner sep=1pt]
\tikzstyle{loc} = [circle,thick,draw]
\tikzstyle{pre} = [<-,shorten <=1pt,>=stealth',semithick]
\tikzstyle{post} = [->,shorten <=1pt,>=stealth',semithick]
\footnotesize
\begin{tikzpicture}[node distance=1.1cm]
    \node [start] (a) {$a$};
    \node [loc] (b) [left of=a,yshift=-10mm] {$b$}
        edge [pre,bend left] node [label=left:\texttt{i<n}] {} (a);
    \node [loc] (d) [left of=b,yshift=-10mm] {$d$}
        edge [pre,bend left]
                node [label=below:\texttt{~~~~~~A[i]!=x}] {} (b);
    \node [loc] (c) [below of=d,yshift=-5mm] {$c$}
        edge [pre,inner sep=0.5pt] node
            [label=right:
                \begin{tabular}{l}
                    \texttt{t:=countIf(} \\
                    \texttt{~~A,i+1,n,x)}
                \end{tabular}
                ] {} (d);
    \node [loc] (e) [right of=b,yshift=-10mm] {$e$}
        edge [pre,bend right]
                node [label=above:\texttt{~~~~A[i]=x}] {} (b);
    \node [loc,inner sep=1.5pt] (c2) [below of=e,yshift=-5mm] {$c'$}
        edge [pre,inner sep=0.5pt] node
            [label=right:
                \begin{tabular}{l}
                    \texttt{t:=countIf(} \\
                    \texttt{~~A,i+1,n,x)}
                \end{tabular}
                ] {} (e);
    \node [loc] (g) [right of=a,xshift=10mm,yshift=-15mm] {$g$}
        edge [pre,bend right] node [label=right:\texttt{i>=n}] {} (a);
    \node [target] (h) [below of=c2] {$h$}
        edge [pre,bend left=25] node [label=left:\texttt{ret t~~}] {} (c)
        edge [pre] node [label=left:\texttt{ret t+1}] {} (c2)
        edge [pre,bend right=55] node [label=left:\texttt{ret 0}] {} (g);
\end{tikzpicture}}
         \\
        (a) & (b)
    \end{tabular}
    \caption{Two equivalent recursive implementations of the function \texttt{countIf(A,i,n,x)}.}
    \label{fig:countIfRec}
\end{figure}
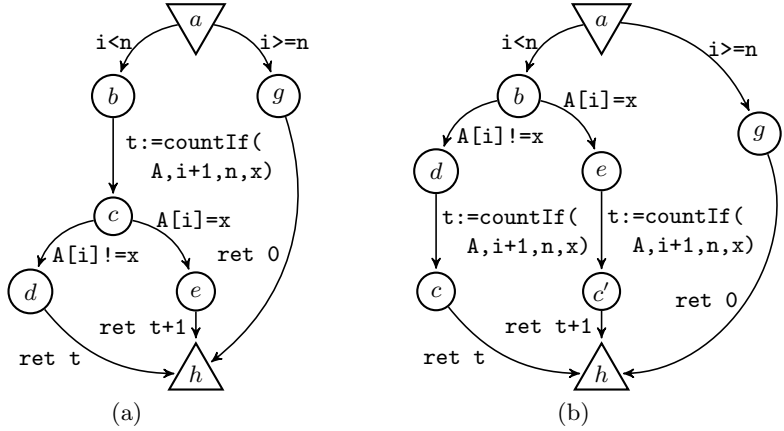

Note that we can simplify computation of $\theta\prm{\kappa}$ of the template $t_2$ such that we only express a return value of $f$. We do not need to express local variables of $f$, since requirement (R2) of Definition~\ref{def:Templates} uses the equivalence $\overset{g}{\equiv}$. We further note, that the algorithm above also works for indirect recursion. It immediately follows from the algorithm in Section~\ref{sec:Detection:Cycles}, where cyclic path of an analysed program part may traverse several functions.

We finish the section by discussion of the assumption we gave to the cyclic path of $P_2$. We assumed there is no branching along the path from $v$ to $x$. The algorithm presented above can compute templates for tail recursions and for many non-tail ones, while keeping the computation simple (we only need $\theta\prm{\kappa}$ expressed just for return value). Therefore, we believe the assumption has only small impact to applicability of the algorithm. Besides, it is always possible to move edges with recursive calls below branchings not depending on return values form the calls. We demonstrate this process at Figures~\ref{fig:countIfRec}~(a) and (b), where we depict two equivalent recursive implementations of the function \texttt{countIf}. We can easily check that in program at Figures~\ref{fig:countIfRec}~(b) there are two program parts (one per recursive call), for which we can compute templates according to the algorithm described above.

\section{Discussion}

We presented compact symbolic execution using only templates with a single parameter. We further restrict ourselves to computation of templates only for program parts consisting of cyclic paths of representing recursion. We can get even better reduction of size of symbolic execution tree, if we create templates for more complex program parts, and when we use more parameters. Let us consider the function \texttt{countIf} at Figure~\ref{fig:countIf}. The program loop in the function consists of two cyclic paths around it. We have already discussed templates for both cycles in Section~\ref{sec:Overview}. But if we built a single template using two parameters (one parameter per cyclic path), then resulting compact symbolic execution tree would be finite. We see, there is a space for extensions of the basic concepts we presented here. 

Let us consider well known algorithm \texttt{binarySearch}. Template detection for this program (even with a single parameter) may infer geometric progressions as values of some variables. They may later cause serious performance issues for SMT solver, when they get into a path condition.

Compact symbolic execution commonly has higher performance requirements to SMT solvers then classic one. Path conditions may contain template parameters besides symbols. And parameters are quantified. This is the price of the ability to reason about multiple program paths at once.

King showed effectiveness of symbolic execution for automated testing generation~\cite{Kin76}. Producing a good test typically means to reach some interesting (e.g.~bug suspicious) program location. Compact symbolic execution can be very helpful in this task. Let us consider a situation, when reachability of such a target location is dependant on an exact number of iterations of a particular cycle. Providing a template for a program part with the cycle, we can simultaneously reason about all the paths exiting from the cycle. Therefore, instead of exploration of paths space by classic symbolic execution, we can just send a query to SMT solver to check satisfiability of parametrised path condition.

King also showed in his paper~\cite{Kin76}, how symbolic execution can be used in proving program correctness according to Floyd's method~\cite{Floyd67}. Using templates we can decrease or in some cases even eliminate the need of loop invariants. For programs, where compact symbolic execution is finite in contrast to classic one, there we do not need loop invariants at all. And for other programs, loop templates describe behaviour of some paths through loop, and we may therefore provide simpler invariants for the remaining behaviour of the loop.

\section{Related Work}

Compact symbolic execution is tightly related to the work of King in 1976~\cite{Kin76}, where the author introduced the general concept of classic symbolic execution. Besides the description of symbolic execution King discussed its applicability to program testing and formal proving of correctness according to Floyd's method~\cite{Floyd67}. Nevertheless, issues like the path explosion problem were not tackled.

In~\cite{KPV03} authors propose a program instrumentation by a code providing lazy initialisation of dynamically allocated data structures like lists or trees and they enable symbolic execution of the instrumented program by a standard model checker without building a dedicated tool. The lazy initialisation algorithm is further improved and formally defined as an operational semantics of a core subset of the Java Virtual Machine in~\cite{DLR11}.

A scalability of symbolic execution to real world programs can be improved by exploring only client's code~\cite{KS05}. A library code (like string manipulation, standard containers like sets or maps) can be assumed as well defined and properly tested.

There are several symbolic execution based techniques constructing loop summaries or simply counting loop iterations~\cite{GL11,SPmCS09,ST11}. The introduction of counters usually provides a possibility to speak about multiple paths through loop at once.
A technique presented in~\cite{GL11} analyses loops on-the-fly, i.e.~during simultaneous concrete and symbolic execution of a program for a concrete input. The loop analysis infers inductive variables, i.e. variables that are modified by a constant value in each loop iteration. These variables are used to build loop summaries expressed in a form of pre and postconditions. 
The LESE technique presented in~\cite{SPmCS09} introduces symbolic variables for the number of times each loop was executed. LESE links the symbolic variables with features of a known grammar generating inputs. Using these links, the grammar can control the numbers of loop iterations performed on a generated input.
A symbolic-execution-based algorithm in~\cite{ST11} produces a nontrivial necessary condition on input values to drive the program execution to the given location. The key part of the technique is computation of loop summaries in form of symbolic program states and path conditions both parametrised by so called path counters. Each path counter is assigned to individual path through the analysed loop.

There are also approaches computing function summaries~\cite{G07,AGT08}. Reusing summaries at call sites typically leads to an interesting performance improvement. Moreover, summaries may insert additional symbolic values into a path condition which often leads to another performance improvement.

Finally, there are also techniques partitioning program paths into separate classes according to impact of the paths to a given set of program variables~\cite{QNR11,SH10}. Values of output variables are typically considered as a partitioning criteria.


\section{Conclusion}

We introduced a generalisation of classic symbolic execution called compact symbolic execution. We generalised notion of symbols of classic symbolic execution such that symbols can be related to different program locations now. This allows us to analyse individual parts of a given program separately from the rest of the program.
We further introduced concept of templates representing declarative parametric descriptions of behaviour of separately analysed program parts. We gave precise definition of templates with one parameter and we provided algorithm of compact symbolic execution using these templates.

\bibliographystyle{plain}
\bibliography{cse} 

\end{document}